\def\eg{{e.g.\@}}
\def\ie{{i.e.\@}}
\setlist[description]{font=\normalfont\scshape}
\DeclareMathOperator{\dom}{dom}
\newcommand{\VExp}{\mathsf{VExp}}
\newcommand{\BExp}{\mathsf{BExp}}
\newcommand{\Cmd}{\mathsf{Cmd}}
\newcommand{\Bool}{\mathrm{Bool}}
\newcommand{\cskip}{\mathbf{skip}}
\newcommand{\cfence}{\mathbf{fence}}
\newcommand{\ifb}[3]{\mathbf{if} \ #1\ \mathbf{then}\ #2\ \mathbf{else}\ #3}
\newcommand{\whileb}[2]{\mathbf{while}\ #1\ \mathbf{do}\ #2}
\DeclareMathOperator{\parop}{{\|}}
\newcommand{\Apo}{{\mathcal{A}_{PO}}}
\newcommand{\Atso}{{\mathcal{A}_{TSO}}}
\newcommand{\Aw}{{\mathcal{A}_w}}
\newcommand{\Ar}{{\mathcal{A}_r}}
\newcommand{\Ab}{{\mathcal{A}_b}}
\newcommand{\calP}{\mathcal{P}}
\newcommand{\PB}{\mathcal{B}}
\newcommand{\poms}{\mathrm{Pom}}
\newcommand{\Ppo}{{\mathcal{P}_{PO}}}
\newcommand{\lc}[2]{#1{\downarrow}_{#2}}
\renewcommand{\nc}{\mathrel{\parop}}
\newcommand{\comp}{\mathrel{{\leq}{\geq}}}
\newcommand{\lst}[1]{{[\,#1\,]}}
\newcommand{\emptylist}{{\lst{}}}
\newcommand{\lists}[1]{#1~\mathsf{list}}
\newcommand{\scons}[2]{#1 \mathrel{\Uparrow} #2}
\newcommand{\state}[1]{\left[#1\right]}
\newcommand{\supdm}[1]{\state{#1}}
\newcommand{\supd}[2]{\supdm{#1\mid#2}}
\newcommand{\Loc}{\mathrm{Loc}}
\newcommand{\BLoc}{\mathrm{BLoc}}
\newcommand{\Locs}{\mathrm{Locs}}
\newcommand{\pfin}{\rightharpoonup_{fin}}
\newcommand{\N}{\mathbb{N}}
\newcommand{\lsState}[1]{\beta_{#1}}
\newcommand{\emptystate}{{\state{\,}}}
\let\oldrestriction\restriction%
\renewcommand{\restriction}{{\oldrestriction}}
\DeclareMathOperator{\Lin}{Lin}
\newcommand{\sembr}[1]{\llbracket#1\rrbracket}
\newcommand{\footp}[1]{\sembr{#1}}
\newcommand{\zfootp}[1]{\footp{#1}^\zeta}
\newcommand{\pex}[1]{\sembr{#1}}
\newcommand{\PT}{\mathcal{P}}
\newcommand{\PTSO}{\PT_{TSO}}
\DeclareMathOperator{\seqop}{\triangleleft}
\renewcommand{\seqop}{\mathrel{{\triangleleft}}}
\newcommand{\emptypset}{\mathbf{0}}
\newcommand{\Ls}{\mathsf{Ls}}
\DeclareMathOperator{\seqit}{\curvearrowright}
\DeclareMathOperator{\splt}{split}
\newcommand{\truev}{\mathit{true}}
\newcommand{\falsev}{\mathit{false}}
\DeclareMathOperator{\charop}{\chi}
\newcommand{\charac}[2]{\charop_{#1}(#2)} 
\DeclareMathOperator{\differop}{\Delta}
\newcommand{\differ}[2]{\differop_{#1}(#2)} 
\newcommand{\maxo}[1]{\max\{#1,0\}}
\newcommand{\E}{\mathcal{E}}
\newcommand{\DExec}[1]{\E(#1)}
\newcommand{\T}{\mathcal{T}}
\newcommand{\aand}{\land}
\newcommand{\rn}[1]{(\textsc{#1})} 
\newcommand{\defin}[1]{{\usefont{T1}{lmss}{b}{n}{#1}}} 
\begin{document}

\title{A Denotational Semantics for SPARC TSO}
\author{Ryan Kavanagh}
\author{Stephen Brookes}     

\address{Computer Science Department\\Carnegie Mellon University\\Pittsburgh, PA, USA}
\email{\{rkavanagh, brookes\}@cs.cmu.edu}
\thanks{Funded in part by a Natural Sciences and Engineering Research Council of Canada (NSERC) Postgraduate Scholarship.}

\keywords{SPARC TSO, denotational semantics, pomsets, concurrency, weak memory models.}

\begin{abstract}
  The SPARC TSO weak memory model is defined axiomatically, with a non-compositional formulation that makes modular reasoning about programs difficult.
  Our denotational approach uses pomsets to provide a compositional semantics capturing exactly the behaviours permitted by SPARC TSO\@.
  It uses buffered states and an inductive definition of execution to assign an input-output meaning to pomsets.
  We show that our denotational account is sound and complete relative to the axiomatic account, that is, that it captures exactly the behaviours permitted by the axiomatic account.
  Our compositional approach facilitates the study of SPARC TSO and supports modular analysis of program behaviour.
\end{abstract}

\maketitle

\section{Introduction}%
\label{sec:intro}

A memory model specifies what values can be read by a thread from a given memory location during execution.
Traditional concurrency research has assumed \textit{sequential consistency}, wherein memory actions operate atomically on a global state, and a read is guaranteed to observe the value most recently written to that location \textit{globally}.
Consequently, ``the result of any execution is the same as if the operations of all the processors were executed in some sequential order''~\cite{Lamport1979}.
However, sequential consistency negatively impacts performance, and modern architectures often provide much weaker guarantees.
These weaker guarantees mean that classical concurrency algorithms, which often assume sequential consistency, can behave in unexpected ways.
Consider, for example, the Dekker algorithm on a system using the SPARC instruction set.
The Dekker algorithm seeks to ensure that at most one process enters a critical section at a time.
Executing the following instance of the Dekker algorithm on a sequentially consistent system from an initial state where memory locations $w$, $x$, $y$, $z$ are all zero will ensure that we end in a state where not both $z$ and $w$ are set to one:
\[ (x := 1; \ifb{y = 0}{z:=1}{\cskip}) \parop (y:=1; \ifb{x=0}{w:=1}{\cskip}). \]
However, the SPARC ISA provides the weaker SPARC TSO (total store ordering) memory model.
Under SPARC TSO, it is possible to start from the aforementioned initial state and end in a state where both $z$ and $w$ are set to one, thus violating mutual exclusion.

Weak memory models are often described in standards documents using natural language.
This informality makes it difficult to reason about how programs will behave on systems that use these memory models.
The SPARC Architecture Manual~\cite{SPARC8} gives an axiomatic description of TSO using partial orders of actions.
We present this description in Section~\ref{sec:axiom}; other axiomatic approaches are discussed in Section~\ref{sec:related-work}.
Despite their formality, it is hard to use axiomatic accounts to reason about the behaviour of programs.
This is because the axiomatic approach is non-compositional and precludes modular reasoning.
We address this problem by presenting a denotational semantics for SPARC TSO in Section~\ref{sec:denotational}.
Our denotational semantics assigns to each program a collection of pomsets.
Pomsets are generalizations of traces and were first used by Pratt~\cite{Pratt1986} to give denotational semantics for concurrency, and later by Brookes~\cite{Brookes2016MFPS}, with some modifications, to study weak memory models.
We illustrate our semantics by validating various \textit{litmus tests} and expected program equivalences.
To ensure our denotational semantics accurately captures the behaviour of SPARC TSO, we show in Section~\ref{sec:sands} that from every denotation of a program we can derive a collection of partial orders satisfying the axiomatic description of Section~\ref{sec:axiom} and, moreover, that we can derive every such partial order from the denotation.

\section{An Axiomatic Account}%
\label{sec:axiom}

The SPARC manual~\cite{SPARC8} gives an axiomatic description of TSO in terms of partial orders of actions.
Unfortunately, this description is incomplete because it fails to specify the fork and join behaviour of TSO\@.
In this section, we complete the SPARC manual's account of TSO with an account of forking and joining.
Before doing so, we give an informal description of TSO to help build intuition.

A system providing the TSO weak memory model can be thought of as a collection of processors, each with a write buffer.
Whenever a processor performs a write, it places it in its write buffer.
The buffer behaves as a queue, and writes migrate out of the buffers one at a time, and shared memory applies them according to a global total order.
Whenever a processor tries to read a location, it first checks its buffer for a write to that location.
If it finds one, it uses the value of the most recent such write; otherwise, it looks to shared memory for a value.
Because of buffering, it is possible for writes and reads to be observed out of order relative to the program order.

\subsection{Program Order Pomsets}

To make the above intuition precise, we must formalize the notion of a \textit{program order}, \ie, the ordering of read and write actions specified by a program.
We do so by means of partially-ordered multisets.

\begin{defi}
A (strict) \defin{partially-ordered multiset} or \defin{pomset} $(P, <, \Phi)$ over a label set $L$ consists of a strict poset $(P, <)$ of ``action occurrences'' and a function $\Phi : P \to L$ mapping each action occurrence to its label or ``action''.
\end{defi}

\noindent We frequently write just $P$ for $(P, <, \Phi)$, in which case we let $<_P$ and $\Phi_P$ denote its obvious components.
Denote by $\poms(L)$ the set of pomsets over $L$\@.
We remark that pomsets are a natural generalization of traces.
Indeed, all traces are simply pomsets where the underlying order is total.

We do not usually make the poset $P$ explicit, because the structure of the pomset is invariant under relabellings of the elements of $P$\@.
Consequently, we identify pomsets $(P, <, \Phi)$ and $(P', <', \Phi')$ if there exists an order isomorphism $\phi : P \to P'$ such that $\Phi = \Phi' \circ \phi$\@.
We usually denote the elements of the pomset using just their labels, but we sometimes need to specify their exact occurrence, in which case we write $l_p$, where $l = \Phi(p)$\@.

It is useful to draw a pomset $P$ as a labelled directed acyclic graph, where multiple vertices can have the same label and we have an edge $a \to b$ if $a <_P b$\@.
For clarity, we always omit edges obtained by transitivity of $<_P$\@.
For example, the following graph depicts the pomset where $P = \{ 0, 1, 2, 3\}$, the order is given by $0 < 1$, $1 < 2$, $0 < 2$, and $0 < 3$, and $\Phi$ is given by $\Phi(0) = a$, $\Phi(1) = b$, $\Phi(2) = a$, and $\Phi(3) = c$:
\[\xymatrix{c & a \ar[l] \ar[r] & b \ar[r] &a}.\]

We assume a countably infinite set of locations $\Loc$, ranged over by metavariables $x, y, z, \dotsc$, and a set of values $V$, ranged over by $v$\@.
In our examples, we will take $V$ to be the set of integers.
We call $x:=v$ a \textit{global write action}, $x = v$ a \textit{read action}, and $\delta$ a \textit{skip action}.
Let $\Aw$ and $\Ar$ be the sets of global write actions and read actions, respectively.

\begin{defi}
  A \defin{program order} is a pomset $P$ over the set $\Apo = \Aw \cup \Ar \cup \{\delta\}$ of action labels that satisfies the \textit{(locally) finite height property}, that is, such that for all $b \in P$, the set $\{ a \in P \mid a <_P b \}$ is finite.
\end{defi}

\noindent Informally, the finite height property guarantees that all actions in $P$ can be executed after finitely many other actions, \ie, that the program order contains no unreachable actions.

Intuitively, the program order \[ \xymatrix{x:=2\ar[r] & y=1& y:=1\ar[r] & y=1} \] describes the parallel execution of writing 2 to $x$ before reading 1 from $y$, and writing 1 to $y$ before reading 1 from $y$, with no other ordering constraints.

\subsection{TSO Axioms}

We now turn our attention to our completed version of the axiomatic account given in the SPARC manual.
To do so, we first introduce the notion of state and the requisite notation.

A global state is a finite partial function from locations $\Loc$ to values $V$\@.
We let $\Sigma_{PO} = \Loc \pfin V$ be the set of global states, and use $\sigma$ to range over $\Sigma_{PO}$\@.

Given any set $S$ and partial order $<_S$ on it, every element $s \in S$ determines a set $\lc{s}{S} = \{ s' \in S \mid s' <_S s \} \cup \{s\}$ called its \textit{lower closure}.
Write $s \nc_S s'$ to denote that $s$ and $s'$ are not comparable under the reflexive closure $\leq_S$ of $<_S$, and write $s \comp_S s'$ to denote that they are comparable.

\begin{defi}%
  \label{def:ctso}
  Let $P$ be a program order and $<_T$ be a strict partial order on the elements of $P$\@.
  We say $<_T$ is \defin{TSO-consistent} with $P$ from (the initial state) $\sigma$ if it satisfies the following six axioms:
  \begin{description}[before={\renewcommand\makelabel[1]{##1}}]
  \item[\rn{O}] \textbf{Ordering:} $<_T$ totally orders the write actions $\Aw$ of $P$\@.
  \item[\rn{V}] \textbf{Value:} for all reads ${(x=v)}_r$ in $P$, either
    \begin{description}[before={\renewcommand\makelabel[1]{##1}}]
    \item[\rn{a}] there exists a write ${(x:=v')}_w$ maximal under $<_T$ amongst all writes~to~$x$~in $\lc{{(x=v)}_r}T$, all writes to $x$ in $\lc{{(x=v)}_r}P$ are in $\lc{{(x:=v')}_w}T$, and $v = v'$;~or
    \item[\rn{b}] there exists a write ${(x:=v')}_w$ maximal under $<_P$ amongst all writes to~$x$~in $\lc{{(x=v)}_r}P$, and both ${(x=v)}_r <_T {(x:=v')}_w$ and $v = v'$; or
    \item[\rn{c}] there are no writes to $x$ in $\lc{{(x=v)}_r}T$ or $\lc{{(x=v)}_r}P$, and $\sigma(x) = v$\@.
    \end{description}
  \item[\rn{L}] \textbf{LoadOp:} for all reads $r \in P$ and all actions $a \in P$, $r <_P a$ implies $r <_T a$\@.
  \item[\rn{S}] \textbf{StoreStore:} for all writes $w, w' \in P$, $w <_P w'$ implies $w <_T w'$\@.
  \item[\rn{F}] \textbf{Fork:} if $\alpha_1 <_P \alpha_2$, $\alpha_1 <_P \alpha_3$, and $\alpha_2 \nc_P \alpha_3$, then $\alpha_1 <_T \alpha_2$ and $\alpha_1 <_T \alpha_3$\@.
  \item[\rn{J}] \textbf{Join:} if $\alpha_1 <_P \alpha_3$, $\alpha_2 <_P \alpha_3$, and $\alpha_1 \nc_P \alpha_2$, then $\alpha_1 <_T \alpha_3$ and $\alpha_2 <_T \alpha_3$\@.
  \end{description}
  The fork axiom is easily understood by: if $\alpha_1 \leftarrow \alpha_1 \rightarrow \alpha_3$ in $P$, then $\alpha_2 \leftarrow \alpha_1 \rightarrow \alpha_3$ or $\alpha_1 \rightarrow \alpha_2 \rightarrow \alpha_3$ in $(P,{<_T})$; the join axiom is symmetric.
  We simply say $<_T$ is TSO-consistent with $P$ if there exists some initial state $\sigma$ from which they are TSO-consistent.
  It will be useful to identify $<_T$ and the pomset $T = (P, <_T, \Phi_P)$\@.
  \qed%
\end{defi}

\noindent Axioms \rn{O}, \rn{Va}, \rn{Vb}, \rn{L}, and \rn{S} are directly adapted from the formal specification given in Appendix~K.2 of~\cite{SPARC8}.
We introduce axiom \rn{Vc} to simplify our presentation of examples.
By requiring that programs first write to any locations from which they read, it can be omitted, and apart from examples, we will assume throughout that our TSO-consistent orders do not require \rn{Vc}.
Though the formal specification does not provide axioms \rn{F} and \rn{J}, they are consistent with the behaviour intended by Appendix~J.6 of~\cite{SPARC8}.
Intuitively, axiom \rn{Vb} requires that whenever a processor reads from a location, it must use the most recent write to that location in its write buffer (if it exists).
If there is no such write in its write buffer, but we have observed a global write to that location, then \rn{Va} requires that the most recent such write be the one read.
Our presentation differs slightly from the formal specification.
In particular, we do not consider instruction fetches or atomic load-store operations, and we do not consider flush actions, because they can be implemented as a derived action in our semantics by forking and immediately joining.
To be consistent with our pomset development, we also assume the order~to~be~strict.

As the following proposition's corollary shows, if $<_T$ is TSO-consistent for $P$, then there exists a (not necessarily unique) total order on $P$ that is TSO-consistent with $P$ and contains $<_T$\@.
As a result, we can view all orders that are TSO-consistent with $P$ as weakenings of total orders that are TSO-consistent with $P$\@.
The proposition follows by a straightforward check of the axioms, where \rn{V} is the only axiom that is not immediate.

\begin{prop}
  Let $<_T$ be TSO-consistent for $P$ and $a, b \in P$ be two actions such that $a \nc_T$ b.
  Let $<_{ab}$ be the transitive closure of ${<_T} \cup \{(a,b)\}$\@.
  If there exist maximum writes under $a$ and $b$ relative to $\leq_T$, call them $\mu_a$ and $\mu_b$, respectively.
  If either $\mu_a$ and $\mu_b$ exist and $\mu_a <_T \mu_b$, or $\mu_a$ does not exist, then $<_{ab}$ is TSO-consistent for $P$\@.
  \qed%
\end{prop}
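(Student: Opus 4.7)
The plan is to verify each of the six axioms for $<_{ab}$ from the initial state $\sigma$ witnessing TSO-consistency of $<_T$. First I would show $<_{ab}$ is a strict partial order: since $a \nc_T b$ we have $b \not<_T a$ (and $a \neq b$), so adding the pair $(a,b)$ creates no cycle and the transitive closure is irreflexive. Axiom \rn{O} then survives because $<_T$ already totally ordered the writes and $<_{ab}$ only adds comparabilities. Axioms \rn{L}, \rn{S}, \rn{F}, and \rn{J} all have the form ``a $<_P$-hypothesis implies a $<_T$-conclusion''; since $<_T \subseteq <_{ab}$, their conclusions lift verbatim.

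The substantive axiom is \rn{V}, which I would reduce to the single claim: \emph{for every read ${(x=v)}_r \in P$, the set of writes in $\lc{{(x=v)}_r}{<_{ab}}$ coincides with that in $\lc{{(x=v)}_r}{T}$}. Granting this, each witness for \rn{Va}, \rn{Vb}, or \rn{Vc} under $<_T$ remains a witness under $<_{ab}$: the ``maximal write to $x$'' side-conditions quantify over identical candidate sets (and maximality under $<_T$ upgrades to maximality under $<_{ab}$ by totality of writes), while the remaining inclusions and inequalities are all of the form ``something $<_T$ something'', which $<_T \subseteq <_{ab}$ preserves.

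To prove the claim I would characterize the new elements: any $c \in \lc{d}{<_{ab}} \setminus \lc{d}{T}$ requires a $<_{ab}$-path from $c$ to $d$ using the fresh edge $(a,b)$, which by transitivity forces $c \leq_T a$ and $b \leq_T d$. Specializing $d$ to a read ${(x=v)}_r$: if $\mu_a$ does not exist, no write $c$ satisfies $c \leq_T a$, so nothing is added; if $\mu_a <_T \mu_b$, then any write $c \leq_T a$ satisfies $c \leq_T \mu_a <_T \mu_b \leq_T b \leq_T {(x=v)}_r$, hence $c <_T {(x=v)}_r$ already, contradicting novelty. The boundary subcase $c = a$ with $a$ itself a write is absorbed because then $\mu_a = a$ and the same chain applies. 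The main obstacle is simply keeping this case analysis clean, but once the hypothesis is read as controlling which writes can be ``routed around'' the new edge $(a,b)$, the argument is essentially forced.
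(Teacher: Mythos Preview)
Your proposal is correct and matches the paper's approach exactly: the paper merely states that the proposition ``follows by a straightforward check of the axioms, where \rn{V} is the only axiom that is not immediate,'' and you have carried out precisely that check, including the key reduction for \rn{V} via the observation that the hypothesis on $\mu_a$ and $\mu_b$ prevents any new writes from entering $\lc{{(x=v)}_r}{<_{ab}}$.
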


\begin{cor}%
  \label{cor:tsotot}
  If $P$ is a finite program order pomset and $<_T$ a partial order TSO-consistent with $P$, then there exists a total order $\sqsubset_T$ TSO-consistent with $P$ such that ${<_T} \subseteq {\sqsubset_T}$\@.
  \qed%
\end{cor}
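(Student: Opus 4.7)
The plan is to iteratively extend $<_T$ one comparison at a time, using the preceding proposition, until the order becomes total. Since $P$ is finite, the number of unordered pairs of elements that are incomparable under $<_T$ is a nonnegative integer, and I would induct on this quantity. The base case of zero incomparable pairs is immediate, since then $<_T$ is already total and we may take $\sqsubset_T = {<_T}$.

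For the inductive step, fix any incomparable pair $a \nc_T b$ and write $\mu_c$ for the $<_T$-maximum write in $\lc{c}{T}$ when such a write exists. By axiom \rn{O}, the writes of $P$ are totally ordered by $<_T$, so whenever $\mu_a$ and $\mu_b$ both exist they are $\leq_T$-comparable. This lets me pick an orientation for the new edge: if $\mu_a$ does not exist, or if both exist with $\mu_a \leq_T \mu_b$, invoke the preceding proposition to extend $<_T$ by $(a,b)$; otherwise, swap the roles of $a$ and $b$ and extend by $(b,a)$. Either way, the proposition yields a TSO-consistent strict partial order containing $<_T$ with strictly fewer incomparable pairs, and the inductive hypothesis completes the argument.

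The only subtlety is ensuring that some orientation always satisfies the proposition's hypothesis on maximum writes. This is precisely where axiom \rn{O} is indispensable: without a total order on writes, $\mu_a$ and $\mu_b$ could themselves be $<_T$-incomparable and neither candidate edge would satisfy the proposition's condition. Beyond this observation, the corollary is a routine finiteness argument riding on the proposition, and I do not anticipate further obstacles.
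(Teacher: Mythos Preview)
Your approach is correct and is exactly the argument the paper intends: iterate the preceding proposition---using axiom \rn{O} to orient each incomparable pair---until the order is total, with finiteness of $P$ guaranteeing termination. One minor quibble: the proposition as stated requires $\mu_a <_T \mu_b$ strictly, so your appeal to it when $\mu_a = \mu_b$ is not literally licensed; but the proposition's check of \rn{V} goes through unchanged in that case, and the paper's own \qed is no more careful on this point.
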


\noindent Despite Corollary~\ref{cor:tsotot}, one should be careful not to conflate the notion of linearisation with that of TSO-consistent total orders.
Consider, for example, the program order
\[ \xymatrix{x:=2\ar[r] & x= 2 & x:=3\ar[r] & x=3}.
\] The linearisation ${x:=2} < {x:=3} < {x = 3} < {x = 2}$ is not TSO-consistent with the program order because it violates \rn{Va}; the order ${x=2}< {x:=2}< {x = 3} < {x:=3}$ is not a linearisation of the program order but is TSO-consistent with it.

When we have a write followed by a read in the program order, but swapped in the linear order, as in this example, we can imagine the write having gotten stuck in the write buffer, and observing the read before the write.

\section{A Denotational Account}%
\label{sec:denotational}

So far we have dealt with program orders in the abstract.
To make the rest of our development more concrete, we restrict our attention to program orders for well-defined programs in the simple imperative language given below.
These program orders are defined~in~Section~\ref{subsec:popoms}.

Restricting our attention to program orders of these well-defined programs raises the question of \textit{compositionality}.
The key is to find a way to derive TSO-consistent orders for a sequential composition $c_1;c_2$ or parallel composition $c_1\parop c_2$ given TSO-consistent orders for $c_1$ and $c_2$\@.
This is infeasible with the axiomatic approach, which requires reasoning about whole programs and is inherently non-compositional.
In contrast, a denotational approach using pomsets is compositional: it allows us to derive the meaning of a program vis-à-vis a weak memory model from the meanings of its parts vis-à-vis the memory model.

Our denotational semantics has two components.
The first associates to each program a set of \textit{TSO pomsets}, which serves as the \textit{abstract meaning} or \textit{denotation }of the program.
This component is described in Section~\ref{subsec:tsopoms}.
The second associates to each pomset a set of \textit{executions}, which describe its input-output behaviours.
This is described in Section~\ref{sec:footprints}.

\subsection{A Simple Imperative Language}

We express our programs using a simple imperative language.
This formalism avoids the complexity of high-level languages, while still capturing the programs we are interested in.
In the syntax below, $e$ ranges over integer expressions, $b$ over boolean expressions, $c$ over commands, and $p$ over programs.
We distinguish between commands and programs, because although commands can be composed to form new commands, programs are assumed to be syntactically complete and executable.
This distinction will be useful later when we consider executions, where we will assume programs are executed from initial states with empty buffers, but impose no such constraint on executions for commands.
\begin{align*}
  v &::= \dotsc, -2, -1, 0, 1, 2, \dotsc\\
  e &::= v \mid x \mid e_1 + e_2 \mid e_1 * e_2 \mid \cdots\\
  b &::= \truev \mid \falsev \mid \neg b \mid e_1 = e_2 \mid e_1 < e_2 \mid b_1 \vee b_2 \mid b_1 \land b_2 \mid \cdots\\
  c &::= \cskip \mid x := e \mid c_1;c_2 \mid c_1 \parop c_2 \mid \ifb{b}{c_1}{c_2} \mid \whileb{b}{c}\\
  p &::= c
\end{align*}
Let $\VExp$ denote the set of integer expressions, $\BExp$ the set of boolean expressions, and $\Cmd$ the set of commands.

\subsection{PO Pomsets}%
\label{subsec:popoms}

Given a command $c$ in our language, we must now compile it down to its set $\Ppo(c)$ of program order pomsets.
We need operations for the sequential and parallel composition of pomsets over the same set of labels.
When defining compositions of pomsets, we assume without loss of generality that the underlying posets are disjoint.

\begin{defi}
  The \defin{sequential composition} $(P_0, {<_0}, \Phi_0);(P_1, {<_1}, \Phi_1)$ is $(P_0, {<_0}, \Phi_0)$ whenever $P_0$ is infinite, and otherwise it is $({P_0} \cup {P_1}, {<_0} \cup {<_1} \cup P_0 \times P_1, \Phi_0 \cup \Phi_1)$\@.
  The \defin{parallel composition} $(P_0, {<_0}, \Phi_0) \parop (P_1, {<_1}, \Phi_1)$ of pomsets is $(P_0 \cup P_1, {<_0} \cup {<_1}, \Phi_0 \cup \Phi_1)$\@.
The empty pomset $\emptypset = (\emptyset, \emptyset, \emptyset)$ is the unit for sequential and parallel composition.
Given a pomset $P$ on a set of labels $L$ and a subset $L' \subseteq L$, the \defin{restriction} $P\restriction_{L'}$ of $P$ to $L'$ is the pomset on $\Phi^{-1}(L')$ whose ordering is induced by $P$\@.
The \defin{deletion} of $L'$ from $P$ is $P\restriction_{L\setminus L'}$\@.
We lift these operations to sets of pomsets in the obvious manner, \eg, $S_1; S_2 = \{P_1; P_2 \mid P_i \in S_i\}$\@.
\qed%
\end{defi}

\noindent Because the skip action $\delta$ has no effects, we identify program orders $P$ and $P'$ whenever there exists a non-empty pomset $P_\delta$ that can be obtained in two ways: by deleting a finite number of $\delta$ actions from $P$ and also by deleting a finite number of $\delta$ actions from $P'$\@.
This means, \eg, that we identify $\{\delta\};P$, $\{\delta\}\parop P$, and $P$ whenever $P \neq \emptypset$, but $\{\delta\};\emptypset = \{\delta\} \neq \emptypset$\@.

We begin with the program order denotation of expressions.
To each expression $e$, we assign a set $\Ppo(e)$ of tuples of program orders and corresponding values:
\begin{align*}
  \Ppo &: \VExp \to \wp(\poms(\Apo) \times V)\\
  \Ppo(v) &= \{(\{\delta\}, v)\}\\
  \Ppo(x) &= \{(\{x=v\},v) \mid v \in V \}\\
  \Ppo(e_1 \odot e_2) &= \{(P_1 \parop P_2, v_1 \odot v_2) \mid (P_i, v_i) \in \Ppo(e_i)\}
\end{align*}
where $\odot$ ranges over binary operations.
Read expressions $x$ are associated with arbitrary values in $V$ for reasons of compositionality: we do not know with which writes the read may eventually be composed, and so we need to permit reading arbitrary values.
We chose to evaluate binary operations $e_1 \odot e_2$ in parallel; one could just as legitimately have chosen to sequentialise the evaluation and written $P_1;P_2$\@.
We assume $v_1 \odot v_2 \in V$ to be the result of applying the binary operation $\odot$ to $v_1$ and $v_2$\@.
We handle program orders for unary expressions analogously, and assume $\neg b'$ is the result of negating the boolean value $b'$\@.
To simplify the clauses involving conditionals, we give helper functions $\calP_\truev(b)$ and $\calP_\falsev(b)$ to extract the pomsets corresponding to the given boolean values from $\Ppo(b)$\@.
\begin{align*}
  \Ppo &: \BExp \to \wp(\poms(\Apo) \times \Bool)\\
  \Ppo(b) &= \{ (\{\delta\}, b) \}\qquad (b \in \{\truev,\falsev\})\\
  \Ppo(\neg b) &= \{ (P, \neg b') \mid (P, b') \in \Ppo(e) \}\\
  \Ppo(e_1 \odot e_2) &= \{(P_1 \parop P_2, v_1 \odot v_2) \mid (P_i, v_i) \in \Ppo(e_i)\}\\
  \calP_\truev(b) &= \{ P \mid (P,\truev) \in \Ppo(b) \}\\
  \calP_\falsev(b) &= \{ P \mid (P,\falsev) \in \Ppo(b) \}
\end{align*}
Note that in the case of boolean binary operations, the $e_i$ might be integer or boolean expressions, and the corresponding semantic clause for $\Ppo(e_i)$ should be used.

We give the program order denotation of commands in a similar manner, this time associating sets of program orders to each command phrase:
\begin{align*}
  \Ppo &: \Cmd \to \wp(\poms(\Apo))\\
  \Ppo(\cskip) &= \{\{\delta\}\}\\
  \Ppo(x:=e) &= \{ P;\{x:=v\} \mid (P,v) \in \Ppo(e)\}\\
  \Ppo(c_1;c_2) &= \Ppo(c_1);\Ppo(c_2)\\
  \Ppo(c_1\parop c_2) &= \Ppo(c_1) \parop \Ppo(c_2)\\
  \Ppo(\ifb{b}{c_1}{c_2}) &= \calP_\truev(b);\Ppo(c_1) \cup \calP_\falsev(b);\Ppo(c_2)\\
  \Ppo(\whileb{b}{c}) &= \bigcup_{n=0}^\infty I^{n}(b,c)
                        \cup I^\omega(b,c),
\end{align*}
where $I^0(b,c) = \calP_\falsev(b)$ and $I^{n+1}(b,c) = \calP_\truev(b);\Ppo(c);I^n(b,c)$\@.

The only interesting clause is for $\whileb{b}{c}$\@.
Here, we take union of all of the finite unrollings $I^n(b,c)$ of the loop.
We must also consider the case of an infinite loop.
This is captured by $I^\omega(b,c)$, which describes the infinite pomset obtained by unrolling the loop countably infinitely many times.
The $\whileb{b}{c}$ clause also illustrates why we associate the pomset $\{\delta\}$ instead of $\emptypset$ to values: otherwise, we would have $\Ppo(\whileb{\falsev}{c}) = \{\emptypset\}$, and this would break our intuition that this program should be denotationally equivalent~to~$\cskip$\@.
It would also have no executions under the formal account of Section~\ref{sec:executions}.

To illustrate the above semantic clauses, we return to the Dekker program from the introduction.
This program has pomsets of each of the following forms, for each choice of $v \neq 0$ and $v' \neq 0$:
\[ \xymatrix@C-2em@R-1.5em{x:= 1\ar[d] & y:=1\ar[d]\\
    y = 0\ar[d] & x = 0\ar[d]\\
    z := 1 & w := 1},
  \qquad
  \xymatrix@C-2em@R-1.5em{x:= 1\ar[d] & y:=1\ar[d]\\
    y = v & x = 0\ar[d]\\
    & w := 1},
  \qquad
  \xymatrix@C-2em@R-1.5em{x:= 1\ar[d] & y:=1\ar[d]\\
    y = 0\ar[d] & x = v\\
    z := 1 &},
  \qquad
  \xymatrix@C-2em@R-1.5em{x:= 1\ar[d] & y:=1\ar[d]\\
    y = v & x = v'}.
\]
The first program order describes an execution where we read both $y=0$ and $x=0$ and where Dekker fails.
The next three forms of program order describe executions in which one or both reads obtain a non-zero~value.

\subsection{TSO Pomsets}%
\label{subsec:tsopoms}

In this subsection, we assign a set $\PTSO(p)$ of \textit{TSO pomsets} to each program $p$, serving as the abstract meaning or denotation of the program $p$ under the TSO memory model.
To do so, we will need to carefully model write buffers.
For compactness, we will write $\PT$ instead of $\PTSO$ in this section's semantic clauses.

We introduce a set $\BLoc = \{ \bar x \mid x \in \Loc \}$ of buffer locations, assumed to be in bijection with $\Loc$, and let the set of buffer write actions be $\Ab = \{ {\bar x}~:=~v \mid {\bar x}~\in~\BLoc,\ v~\in~ V \}$\@.
An action ${\bar x}~:=~v$ by a thread denotes adding a write $x~:=~v$ to the thread's write buffer.
The set of TSO actions $\Atso$ then consists of $\Apo$ extended with $\Ab$\@.
A \textit{TSO pomset} will then be a pomset in $\poms(\Atso)$ satisfying the finite height property.

To capture the effects of buffers, we parametrize our semantic clauses with lists of global write actions, which represent the writes currently in our buffer.
We let $\Ls = \lists{\Aw}$ be the set of all lists.
The intuition is that write buffers behave as queues under TSO, and we can use a list $L \in \Ls$ to model a queue by dequeuing from the head of the list and enqueuing at the end of the list.
For expository convenience, we identify lists and \textit{linear} pomsets, where we say a pomset is linear if its underlying poset is linear.
Explicitly, we identify $\emptylist$ with the empty pomset $\emptypset$, and $\lst{\lambda_1, \dotsc, \lambda_n}$ with the pomset $\{\lambda_1\};\cdots;\{\lambda_n\}$\@. 
To minimize notation, we leverage this identification and write $L;L'$ to denote the concatenation of $L$ and $L'$\@.

The semantic clauses are given in two strata.
The semantic clauses $\PB$ for ``basic TSO pomsets'' capture the meaning of the syntactic phrases in a manner very similar to the program order definitions in Section~\ref{subsec:popoms}.
$\PB$ assign to each command phrase a function from buffer lists to a set of pairs of TSO pomsets and buffer lists.
We present these clauses using the abbreviation $\PB_L(c) = \PB(c)(L)$\@.
The pomset component of $\PB_L(c)$ captures the meaning of the phrase in the presence of the buffer $L$, while the buffer component describes the state of the buffer after performing the actions associated with the phrase.
In the second stratum, we use $\PT$ clauses to capture the meaning of the phrase in the presence of buffer flushing.
Flushing a write from a buffer $L$ consists of dequeuing a global write $x:=v$ from $L$ and inserting it in the pomset.
$\PT_L(c) = \PT(c)(L)$ is again a subset of $\poms(\Atso) \times \Ls$\@.

To generate TSO pomsets, we modify the semantic clauses generating program orders in four key places to get our basic pomsets.
The first is for write commands $x:=e$\@.
Starting from a buffer $L \in \Ls$, we get the pomset $P$ and associated value $v$ for $e$ from the denotation $\PT_L(e)$ instead of $\Ppo(e)$\@.
The buffer $L$ may have changed to a buffer $L'$ while we were evaluating $e$, and $\PT_L(e)$ also gives us this $L'$\@.
Instead of immediately making a global write to $x$ as we would have in the program order clause, we enqueue the global write on the buffer $L'$:
\[ \PB_L(x:=e) = \{(P;\{\bar x:=v\}, L'; \{x:=v\}) \mid (P,v,L') \in \PT_L(e)\}. \]

We must also change the semantic clauses for read expressions.
By axiom \rn{Vb}, whenever we read from a location $x$, we must use the most recent value available for it in the write buffer, if available.
We use the following helper function to convert a buffer $L \in \Ls$ to a partial function $\lsState{L} : \Loc \pfin V$ giving us the value of the most recent write in $L$ to a given location:
\begin{align}
  \lsState{\emptylist}(x) &= \mathit{undefined,} &
  \lsState{L;\{x:=v\}}(y) &= \begin{cases}
    v & \text{if } x = y\\
    \lsState{L}(y) & \text{otherwise}.
  \end{cases}\label{eq:6}
\end{align}
Then, the semantic clause giving us the basic pomsets for reads is
\[ \PB_L(x) = \{(\{x=v\},v,L) \mid \lsState{L}(x) = v \} \cup \{(\{x=v\},v,L) \mid x \notin \dom(\lsState L), v \in V \}.
\] The first part tells us to use the value associated with $x$ in the buffer $L$, if available.
The second part uses arbitrary values if the value is unavailable, as with program orders.

The third major change involves parallel composition.
We explain parallel composition of expressions; parallel composition of commands is analogous.
By axioms \rn{F} and \rn{J}, we must flush our buffers before every fork and join.
We therefore begin by flushing our entire buffer, \ie, by taking $L$ and placing it at the beginning of our pomset.
Having flushed the buffer, we then evaluate the $e_i$ with empty buffers and get back pomsets $P_i$ and $v_i$\@.
Because we can only join threads if their buffers are empty, we require that these $P_i$ and $v_i$ be associated with empty buffers in $\PT_\emptylist(e_i)$\@.
We then proceed as for the program order, and add the parallel composition of the $P_i$ to our pomset, and compute the value $v_1 \odot v_2$\@.
Because we just joined two empty buffers, our resulting buffer is empty:
\[ \PB_L(e_1 \odot e_2) = \{(L;(P_1 \parop P_2), v_1 \odot v_2, \emptylist) \mid (P_i, v_i, \emptylist) \in \PT_\emptylist(e_i)\}.\]

Finally, when we sequentially compose two commands $c_1$ and $c_2$ (assuming no forking or joining), $c_2$ continues executing from the buffer $c_1$ finished with.
$\PT(c_1)$ and $\PT(c_2)$ are both functions of type $\Ls \to \wp(\poms(\Atso)\times\Ls)$ and are not composable \textit{qua} functions.
Consequently, we need to define a composition operation capturing the above the operational intuition.
This composition is the polymorphic function $\seqit$, where $S \in \wp(\poms\times A)$ and $f \in A \to \wp(\poms\times B)$:
\begin{align*}
  &{\seqit} : \forall A . \forall B. \wp(\poms\times A) \to (A \to \wp(\poms\times B)) \to \wp(\poms\times B)\\
  S &\seqit f = \{ (P;P', b) \mid {(P, a) \in S} \aand {(P', b) \in f(a)} \}.
\end{align*}
Taking $A = B = \Ls$, sequential composition can be expressed using as \[
  \PB_L(c_1;c_2) = \PT_{L}(c_1) \seqit \PT(c_2).
\]
Explicitly, this means $\PB_L(c_1;c_2) = \{(P_1;P_2, L_2) \mid (P_1,L_1) \in \PT_L(c_1), (P_2, L_2) \in \PT_{L_1}(c_2) \}$\@.
This idiom of chaining pairs of pomsets and buffers together using $\seqit$ will be useful throughout.
We make $\seqit$ polymorphic so that we can handle, \eg, the case of $A = \Ls$ and $B = V \times \Ls$ below.

The remainder of the basic clauses are analogous to those for program order pomsets, subject to the modifications described above:
\begin{align*}
  \PB &: \VExp \to \Ls \to \wp(\poms(\Atso) \times V \times \Ls)\\
  \PB_L(v) &= \{(\{\delta\}, v, L)\}\\
  \PB &: \BExp \to \Ls \to \wp(\poms(\Atso) \times \Bool \times \Ls)\\
  \PB_L(\neg e) &= \{ (P, \neg b, L') \mid (P, b, L') \in \PB_{L}(e)
                  \}\\
  \PT_{L,\truev}(b) &= \{ (P, L') \mid (P,\truev,L') \in \PT_L(b) \}\\
  \PT_{L,\falsev}(b) &= \{ (P, L') \mid (P,\falsev,L') \in \PT_L(b) \}\\
  \PB &: \Cmd \to \Ls \to \wp(\poms(\Atso) \times \Ls)\\
  \PB_L(\cskip) &= \{(\{\delta\}, L)\}\\
  \PB_L(c_1\parop c_2) &= \{(L;(P_1 \parop P_2), \emptylist) \mid (P_i, \emptylist) \in \PT_\emptylist(c_i) \}\\
  \PB_L(\ifb{b}{c_1}{c_2}) &= \left(\PT_{L,\truev}(b) \seqit \PT(c_1)\right) \cup \left(\PT_{L,\falsev}(b) \seqit \PT(c_2)\right)\\
  \PB_L(\whileb{b}{c}) &= \bigcup_{n=0}^\infty I_L^{n}(b,c) \cup I_L^\omega(b,c)
\end{align*}
where $I_L^0(b,c) = \PT_{L,\falsev}(b)$ and $I_L^{n+1}(b,c) = \PT_{L,\truev}(b) \seqit \PT(c) \seqit I^n(b,c)$\@.
The set $I^\omega_L(b,c) \subseteq \poms(\Atso)\times\{\emptylist\}$ contains all infinite pomsets obtained through countably infinitely many unfoldings; because we can never observe the buffer at the end, we treat it as empty to simplify presentation.

We now turn our attention to flushing.
The intent is that a thread can flush arbitrarily many of its writes at any point in its execution.
Thus, the pomsets associated with flushes for a buffer $L$ are the prefixes $L'$ of $L$, and the resulting buffers are the remainders of $L$\@.
We use $\splt(L)$ to denote these prefix-suffix pairs:
\begin{align*}
  &\splt : \Ls \to \wp(\poms(\Atso)\times\Ls)\\
  &\splt(L) = \{(L',L'') \mid L = L'; L''\}
\end{align*}

We introduce a variant of $\seqit$ to cope with triples of pomsets, values, and buffers, and will rely on types to disambiguate the version needed in any given situation:
\begin{align*}
  &{\seqit} : \forall A . \forall B. \wp(\poms\times A \times B) \to (B \to \wp(\poms\times B)) \to \wp(\poms\times A \times B)\\
  S &\seqit f = \{ (P;P', A, B') \mid {(P, A, B) \in S} \aand {(P', B') \in f(B)} \}
\end{align*}

We define the TSO pomsets $\PT$ in terms of $\PB$ and $\splt$\@.
$\PT$ composes $\splt$ and $\PB$ in a manner that we can flush some writes from the buffer, then evaluate $e$ or perform $c$, and then flush some writes at the end:
\begin{align*}
  \PT_L(e) &= \splt(L) \seqit \PB(e) \seqit \splt\\
  \PT_L(c) &= \splt(L) \seqit \PB(c) \seqit \splt
\end{align*}
We can validate various expected equivalences by unfolding these definitions.
For example, sequential composition of commands is associative, because
\[
  \PT_L(c_1;(c_2;c_3)) = \splt(L) \seqit \penalty 0 \PB(c_1) \seqit \splt \seqit \PB(c_2) \seqit \splt \seqit \PB(c_3) \seqit \splt = \PT_L((c_1;c_2);c_3).
\]
Using the identity $\emptypset;P = P$ and the fact that parallel composition of pomsets is associative, one can show that parallel composition of commands is associative, \ie, that $\PT_L(c_1 \parop\, (c_2\parop c_3)) = \PT_L((c_1 \parop c_2) \parop c_3)$\@.
The parallel composition of pomsets commutes, so the parallel composition of commands commutes, \ie, $\PT_L(c_1 \parop c_2) = \PT_L(c_2 \parop c_1)$\@.

To illustrate the effects of flushing and the effect of buffers on reads, we consider the expression $x$ in the presence of the buffer $L = \lst{x:=3,y:=2}$\@.
The triples $(P, v, L') \in \PT_L(x)$ are of the form
\begin{align*}
  P &= (\xymatrix@C-1.5em{x=3}) & v &= 3 & L' &= \lst{x:=3,y:=2}\\
  P &= (\xymatrix@C-1.5em{x=3 \ar[r] & x := 3}) & v &= 3 & L' &= \lst{y:=2}\\
  P &= (\xymatrix@C-1.5em{x=3 \ar[r] & x := 3 \ar[r] & y:=2}) & v &= 3 & L' &= \emptylist\\
  P &= (\xymatrix@C-1.5em{x:=3 \ar[r] & x=v}) & v &\in V & L' &= \lst{y:=2}\\
  P &= (\xymatrix@C-1.5em{x:=3 \ar[r] & x=v \ar[r] & y:=2}) & v &\in V & L' &= \emptylist\\
  P &= (\xymatrix@C-1.5em{x:=3 \ar[r] & y:=2 \ar[r] & x=v}) & v &\in V & L' &= \emptylist
\end{align*}
In the first case, the resulting $P$ denotes performing the read without also doing any flushing.
In this case, because we have a write to $x$ in the buffer, the read from $x$ must use its value.
The second and third case give rise to pomsets denoting performing the read before doing one or two flushes.
In the fourth and fifth cases, we first flush the write to $x$ from the buffer; the value $v$ read from $x$ is then free to range over all possible values because there are no other writes to $x$ in the buffer.
Finally, in the last case, we flush all of the writes from the buffer before reading $x$, and the value read from $x$ is again unconstrained.

\begin{defi}
  The \defin{TSO pomsets} of a program $p$ are $\PTSO(p) = \{ P \mid (P, \emptylist) \in \PT_\emptylist(p) \}$.
  \qed%
\end{defi}
\noindent This definition selects pomsets with empty buffers from $\PT_\emptylist(p)$ because we expect programs to be run from an empty buffer and to only stop after emptying all buffers.

We illustrate the constructions by giving four example families of TSO
pomsets for the Dekker program from the introduction, again assuming
$v \neq 0$ and $v' \neq 0$:
\[ \xymatrix@C-2em@R-1.5em{\bar x:= 1\ar[d] & \bar y:=1\ar[d]\\
    x:=1\ar[d] & y:=1\ar[d]\\
    y = 0\ar[d] & x = 0\ar[d]\\
    \bar z := 1\ar[d] & \bar w := 1\ar[d]\\
    z := 1 & w := 1},
  \qquad
  \xymatrix@C-2em@R-1.5em{\bar x:= 1\ar[d] & \bar y:=1\ar[d]\\
    y = 0\ar[d] & x = 0\ar[d]\\
    x:=1\ar[d] & y:=1\ar[d]\\
    \bar z := 1\ar[d] & \bar w := 1\ar[d]\\
    z := 1 & w := 1},
  \qquad
  \xymatrix@C-2em@R-1.5em{\bar x:= 1\ar[d] & \bar y:=1\ar[d]\\
    x:=1\ar[d] & y:=1\ar[d]\\
    y = v & x = 0\ar[d]\\
    & \bar w := 1\ar[d]\\
    & w := 1},
  \qquad
  \xymatrix@C-2em@R-1.5em{\bar x:= 1\ar[d] & \bar y:=1\ar[d]\\
    y =v\ar[d] & y:=1\ar[d]\\
    x := 1 & x = v'}.
  \label{dekker-tso-poms}
\]
In the first family of pomsets, we flush the writes immediately after inserting them in the buffers, while in the second, we flush the writes to $x$ and $y$ after reading $y$ and $x$\@.
In the third family, we flush $x$ right after placing its write in the buffer, but fall into the $\falsev$ case of the conditional after reading some value $v \neq 0$, thus taking the $\cskip$ branch.
In the fourth pomset, we read $y$ after placing the write $x:=1$ in the buffer, but before it gets flushed, and both threads fall into the $\cskip$ branch.

\subsubsection{Laws of parallel programming}

Because parallel composition of pomsets is associative and commutative, and because sequential composition of pomsets is associative, we satisfy many laws of parallel programming from~\cite{Brookes1996} and~\cite{Jagadeesan2012} for free.
Let $C \equiv C'$ whenever $\PT_L(C) = \PT_L(C')$ for all $L$\@.
Our semantics satisfies:
\begin{align*}
  \cskip; c &\equiv c \equiv c; \cskip\\
  (c_1;c_2);c_3 &\equiv c_1;(c_2;c_3)\\
  c_1 \parop c_2 &\equiv c_2 \parop c_1\\
  (c_1 \parop c_2) \parop c_3 & \equiv c_1 \parop (c_2 \parop c_3)\\
  (\ifb{b}{c_1}{c_2});c_3 &\equiv \ifb{b}{c_1;c_3}{c_2;c_3}\\
  \whileb{b}{c} &\equiv \ifb{b}{(c;\whileb{b}{c})}{\cskip}
\end{align*}
Because we must flush buffers before every fork and join, $\cskip$ is not a unit for parallel composition under TSO\@: $\cskip \parop c \not\equiv c$\@.

\subsection{Executions}%
\label{sec:executions}

Our TSO pomset semantics gives an abstract account capturing families of possible executions.
However, compositionality comes with its price: we associate to programs some pomsets that cannot in any real sense be ``executed''.
Consider for example, the pomset $\bar x:=2 \to x := 2 \to x = 1 \to \cdots$ for the program $c~=~(x~:=~2;~\ifb{~x~=~1~}{c_1}{c_2})$\@.
In no circumstance do we expect to execute $c_1$ when this program is run alone, and so the above TSO pomset has, in a sense made precise later, no executional meaning.
However, compositionality requires this pomset be associated with the command $c$, because one \textit{could} execute $c_1$ if our program were instead $c \parop x:= 1$\@.
Our notion of \textit{execution} filters out these pomsets with no executional meaning and yields an input/output behaviour for programs built from their pomset semantics.

\subsubsection{Buffered States}

Our notion of execution uses buffered global state, \ie, a global state with a write buffer per thread.
We execute threads individually.
Each thread's execution starts from a state with a buffer, which in combination reflect that thread's view of shared memory.
Let $\Locs = \BLoc \cup \Loc$ be the set of all locations.
We use elements of $\Sigma = (\BLoc \penalty 1000 \pfin \penalty 1000 (V\times\N)/{\approx}) \times (\Loc \penalty 1000 \pfin \penalty 1000 V)$ to model the combination of a global state and a buffer, where $\approx$ is given by $(v, n) \approx (v', m)$ if and only if both $n = m$, and $v = v'$ or $n = 0$\@.
Because $\Loc$ and $\BLoc$ are disjoint, we identify $\Sigma$ with its obvious inclusion in $\Locs \pfin ({(V\times\N)/{\approx}} \cup V)$\@.
The intuition is that if $\sigma(\bar x) = (v, n)$, then there are $n$ writes to $x$ in $\sigma$'s write buffer, and the most recent buffer write to $\sigma$ had the value $v$\@.
We need to keep track of the number $n$ of writes to $x$ still in the buffer to know whether we should continue reading $x$ from the buffer after a flush.
We identify $(v,0)$ and $(v',0)$ for all $v$ and $v'$ because one should not be able to observe a value for a write that is no longer in the buffer, and this identification allows us to ``forget'' the value by setting $n$ to $0$\@.
For $x, y, z, \dotsc \in \Locs$ and $u, v, w, \dotsc$ in the corresponding subset of $((V\times\N)/\approx) \cup V$, we denote by $\state{x:u,y:v,z:w,\dotsc}$ the buffered state with graph $\{(x,u),(y,v),(z,w),\dotsc\}$\@.
For compactness of notation, we write $v_n$ for the equivalence class of $(v, n)$ in $(V\times\N)/\approx$\@.

\subsubsection{Footprints}%
\label{sec:footprints}

Footprints are the first step towards formalizing execution and filtering out unexecutable pomsets.
Informally, a footstep $(\sigma, \tau) \in \Sigma\times\Sigma$ of an action $\lambda$ is a minimal piece of state $\sigma$ required to be able to perform $\lambda$, and a description $\tau$ of the effects of performing $\lambda$\@.
For example, to perform a global write $x:=v$, $x$ must be in the domain of the initial state and present in the buffer, so $\sigma = [ x:v', \bar x:v''_{n+1}]$ for some $v'$ and $v''$, and the result is setting the global value of $x$ to $v$ while removing one occurrence of $x$ from the buffer, so $\tau = [x:v, \bar x:v''_{n}]$\@.
Though $v$ and $v''$ are unrelated, this gives the correct behaviour in the context of command pomsets because global writes to $x$ occur in the same order as buffer writes to $x$\@.
To perform a read action $x=v$, we must either have no entries for $x$ in the buffer and have $x:v$ in the global state, or we must have $x$ in the buffer with value $v$, \ie, $\bar x:v_n$ for some $n > 0$\@.
We call the set of footsteps associated with an action its \textit{footprint}.
Pomsets also have footsteps and footprints.

\begin{defi}
\defin{TSO footprints} for actions are given as follows:
\begin{align*}
  \footp{x=v} &= \{ ( [ x : v, \bar x : v_0], \emptystate), ([\bar x : v_{n+1}], \emptystate) \mid n \in \N \}\\
  \footp{\bar x := v} &= \{ ([\bar x : v'_n], [\bar x:v_{n+1}]) \mid v' \in V \aand n \in \N \}\\
  \footp{x:=v} &= \{([x:v',\bar x:v''_{n+1}],[x:v, \bar x : v''_n])\mid v', v''\in V \aand n \in \N\}\\
  \footp{\delta} &= \{ (\emptystate, \emptystate) \}
  \tag*{\qed}
\end{align*}
\end{defi}

\noindent To give footprints to pomsets, we need to know when it makes sense to combine two footsteps sequentially or in parallel.
We say two states $\sigma_1$ and $\sigma_2$ are \defin{consistent}, $\scons{\sigma_1}{\sigma_2}$, if for all $x \in \dom(\sigma_1) \cap \dom(\sigma_2)$, $\sigma_1(x) = \sigma_2(x)$\@.
In this case, $\sigma_1 \cup \sigma_2$ is also a state.
Let $\sigma \setminus \dom(\tau) = \sigma \restriction_{\dom(\sigma) \setminus \dom(\tau)}$\@.
Then the result of updating $\sigma$ by $\tau$ is $\supd{\sigma}{\tau} = (\sigma\setminus\dom(\tau)) \cup \tau$\@.
To sequence the footprint $(\sigma_1, \tau_1)$ before the footprint $(\sigma_2, \tau_2)$, we must ensure that the result of the first computation from its initial state, $\supd{\sigma_1}{\tau_1}$, is consistent with the requirements $\sigma_2$ of the second computation, \ie, we must ensure $\scons{\supd{\sigma_1}{\tau_1}}{\sigma_2}$\@.
In this case, sequentially performing both computations requires that the initial state provide everything required by the first computation, plus everything required by the second computation not already provided by the first one.
This is $\sigma_1 \cup (\sigma_2 \setminus  \dom(\tau_1))$\@.
The effects of the two computations are the first's effects updated by the effects of the second one, $\supd{\tau_1}{\tau_2}$\@.
Consequently, for states $(\sigma_i, \tau_i)$ such that $\scons{\supd{\sigma_1}{\tau_1}}{\sigma_2}$, we define their sequential composition to be:
\[
  (\sigma_1, \tau_1) \seqop (\sigma_2, \tau_2) = (\sigma_1 \cup (\sigma_2 \setminus  \dom(\tau_1)), \supd{\tau_1}{\tau_2}).
\]
This lifts to an associative operation on $\Sigma \times \Sigma$:
\[
  S_1 \seqop S_2 = \{ (\sigma_1, \tau_1) \seqop (\sigma_2, \tau_2) \mid (\sigma_i, \tau_i) \in S_i \aand \scons{\supd{\sigma_1}{\tau_1}}{\sigma_2} \}.
\]

To account for global writes occurring elsewhere during the program, we parametrize the rules assigning footprints $\footp{P}_\Lambda$ to pomsets $P$ by a list $\Lambda$ containing a linearisation of the pomset as a subsequence, combined with any number of other global writes that represent flushes from buffers belonging to other threads.
Because these global writes are performed, in principle, by other threads, they should not affect the buffers in the footprints associated with $P$\@.
The exact mechanics of how $\Lambda$ and $P$ interact in producing footprints will be made clear below.
Formally, given a pomset $P$, we let $\Lin(P)$ be the set of its linearisations.
Then the clauses are parametrized by $\Lambda \in \Gamma(P) = \bigcup_{L \in \Ls} \Lin(P \parop L)$, where given some global-write environment $\Lambda \in \Gamma(P)$, we identify $P$ with its image in $\Lambda$\@.
So $\Lambda$ is a linearisation of $P$ interspersed with global writes that will not involve buffers.
In particular, when $P$ is $\{\lambda\}$, then every $\Lambda \in \Gamma(P)$ will be of the form $\Lambda_1;\{\lambda\};\Lambda_2$ for some unique $\Lambda_1, \Lambda_2 \in \Ls$\@.
This fact will be important for understanding the \rn{Act} rule below.
Because $\Lambda$ is a linearisation, it totally orders the writes it contains, including those of $P$\@.

Given some $L\in \Ls$, let $\footp{L}^*$ be inductively defined on the structure of $L$ as follows:
\begin{align*}
  \footp{\,\emptylist\,}^* &= \{(\emptystate, \emptystate)\}\\
  \footp{\{x:=v\} ; L}^* &= \{ ([x:v'],[x:v]) \mid v'\in V\} \seqop \footp{L}^*
\end{align*}
The intuition here is that the foreign buffer flushes in $L$ should only affect the global part of the state and have no effect on our buffer.
Given a list $\lst{x_1:=v_1, \dotsc, x_n:=v_n}$, we write $\footp{x_1:=v_1,\dotsc,x_n:=v_n}^*$ for $\footp{\lst{x_1:=v_1,\dotsc,x_n:=v_n}}^*$\@.

We will need to know if buffered states have empty buffers.
We let $\zeta(\sigma)$ hold if and only if for all $x \in \dom(\sigma \restriction_\BLoc)$, $\sigma(x) = v_0$\@.
In other words, $\zeta(\sigma)$ holds if and only if $\sigma$ ``has an empty buffer''.
Thus, for example, both $\zeta(\emptystate)$ and $\zeta([x:1,\bar y:2_0])$, but neither $\zeta([x:1, \bar x :2_5])$ nor $\zeta([\bar y:1_1])$\@.

\begin{defi}
  The \defin{footprint} $\footp{P}_\Lambda$ of a pomset $P$ under a global-write environment $\Lambda \in \Gamma(P)$ is the smallest set closed under the following three rules:
  \begin{description}[before={\renewcommand\makelabel[1]{##1}}]
  \item[\rn{Act}] If $P = \{\lambda\}$ for some action $\lambda$ and $\Lambda = \Lambda_1;P;\Lambda_2$ for some $\Lambda_1, \Lambda_2 \in \Ls$, then $\footp{P}_\Lambda = \footp{\Lambda_1}^* \seqop \footp{\lambda} \seqop \footp{\Lambda_2}^*$\@.
  \item[\rn{Seq}] If $P = P_1; P_2$ and $\Lambda = \Lambda_1;\Lambda_2$, then $\footp{P_1}_{\Lambda_1} \seqop \footp{P_2}_{\Lambda_2} \subseteq \pex{P}_\Lambda$\@.
  \item[\rn{Par}] If $P = P_1 \parop P_2$, $\Lambda_1$ is the result of deleting the read and buffer write actions of $P_2$ from $\Lambda$, $\Lambda_2$ is the symmetric restriction, $(\sigma_i, \tau_i) \in \pex{P_i}_{\Lambda_i}$, $\zeta(\sigma_i)$ and $\zeta(\tau_i)$ ($i = 1,2$), and $\sigma_1 \Uparrow \sigma_2$, then $(\sigma_1 \cup \sigma_2, \tau_1 \cup \tau_2) \in \pex{P}_\Lambda$\@.
  \qed%
  \end{description}
\end{defi}

\noindent This definition is inspired by Lamport's ``happened before'' relation~\cite{Lamport1978}.
In the case of \rn{Act}, for a given $P = \{\lambda\}$ and $\Lambda \in \Gamma(P)$, the intuition is that $\Lambda$ specifies that the global writes in $\Lambda_1$ happened before $\lambda$, and that $\lambda$ happened before the global writes in $\Lambda_2$\@.
In the case of \rn{Seq}, for $\footp{P_i}_{\Lambda_i}$ to be well-defined, we are implicitly assuming that $\Lambda_i \in \Gamma(P_i)$\@.
\rn{Seq} tells us that the result of sequentially executing a program in the presence of global writes should be the same as executing the pieces sequentially in the presence of the appropriate subset of global writes.
Finally, in \rn{Par}, the restrictions of $\Lambda$ are such that both parallel components observe \textit{all} writes in the same order, and this is how we simulate the effects of writes to a global state.
One can show that the set $\footp{P}_\Lambda$ is well-defined with regards to our identification of pomsets up-to-isomorphism.
In particular, $\footp{(P_1;P_2);P_3}_\Lambda = \footp{P_1;(P_2;P_3)}_\Lambda$ and $\footp{(P_1\parop P_2)\parop P_3}_\Lambda = \footp{P_1\parop(P_2 \parop P_3)}_\Lambda$\@.

We give various results below that simplify computing the footprint of pomsets.%
\label{pars:sp-explan}
By Proposition~\ref{prop:6}, a pomset has a footprint only if it is series-parallel.
A pomset is \defin{series-parallel} or \defin{SP} if it is linear, or if it is the sequential or parallel composition of SP pomsets.

The maximal linear segments of a SP pomset are its \textit{SP components}; any SP pomset can be uniquely decomposed into these.
To compute the footprint of a pomset $P$, we can use Proposition~\ref{prop:2} and~\ref{prop:8} to first decompose $P$ into its series-parallel components, each of which will be linear.
We can then compute the footprints of these linear components using Corollary~\ref{cor:1} and Proposition~\ref{prop:12}, and combine them using the appropriate applications of \rn{Seq} and~\rn{Par}.

We use these results to illustrate how global-write environments simulate the effects of global writes.
Let $P_1$ and $P_2$ be the pomsets
\[ \xymatrix{\bar x :=  2 \ar[r] & x:=2 \ar[r] & x = 3} \qquad\text{and}\qquad \xymatrix{\bar x :=  3 \ar[r] & x=3 \ar[r] & x:=3},
\] respectively, and let $P = P_1 \parop P_2$\@.
Consider the global-write environments
\begin{align*}
  \Lambda &= \lst{\bar x := 2, \bar x := 3, x := 2, x = 3, x:= 3, x = 3}\\
  \Lambda_1 &= \lst{\bar x := 2, x := 2, x := 3, x = 3}\\
  \Lambda_2 &= \lst{\bar x := 3, x := 2, x = 3, x:= 3}.
\end{align*}
We see that $\Lambda$ is a global-write environment for $P$, and that $\Lambda_i$ is the restriction of $\Lambda$ given by the \rn{Par} rule for $P_i$ and is again a global-write environment for $P_i$\@.
To compute the footprint of $P$, we must begin by applying the \rn{Par} rule and recursively compute the footprint of $P_i$ under $\Lambda_i$\@.
We first consider $\footp{P_1}_{\Lambda_1}$\@.
By Corollary~\ref{cor:1}, it is given by \[
  \footp{\bar x := 2} \seqop \footp{x:=2} \seqop \footp{x:=3}^* \seqop \footp{x=3},
\] where we omitted the instances of the unit $\footp{\emptylist}^*$ for the $\seqop$ operation.
Simplifying this expression, we get that $\footp{P_1}_{\Lambda_1} = \{(\state{x:v, \bar x:v'_n}, \state{x:3, \bar x:2_n}) \mid v, v' \in V \aand n \in \N \}$\@.
Despite there being no write of $3$ to $x$ in $P_1$, the presence of such a global write in $\Lambda_1$ and its position in $\Lambda_1$ mean that the read $x=3$ can be executed by $P_1$\@.
More interesting, perhaps, is the footprint $\footp{P_2}_{\Lambda_2}$, and in particular, how the global write $x:=2$ does not interfere with the read $x = 3$ thanks to buffering.
Again by Corollary~\ref{cor:1}, it is given by \[
  \footp{\bar x := 3} \seqop \footp{x:=2}^* \seqop \footp{x=3} \seqop \footp{x:=3}.
\]
The subexpression $\footp{\bar x := 3} \seqop \footp{x:=2}^*$ simplifies to
\begin{equation}
  \{ (\state{x:v,\bar x:v'_n}, \state{x:2,\bar x:3_{n+1}}) \mid v, v' \in V \aand n \in \N \}\label{eq:1}
\end{equation}
Because for each $(\sigma, \tau) \in \footp{\bar x := 3} \seqop \footp{x:=2}^*$ we have $\tau(\bar x) = 3_{n}$ for some $n > 0$, \ie, because $\tau$'s buffer has a write to $x$, the footsteps $(\state{x:3,\bar x:v_0}, \emptystate) \in \footp{x = 3}$ describing a read of $3$ from shared memory are ignored when combining $\footp{\bar x := 3} \seqop \footp{x:=2}^*$ with $\footp{x=3}$ using the $\seqop$ operation.
Instead, $\footp{\bar x := 3} \seqop \footp{x:=2}^*$ gets combined with the subset $\{ (\state{\bar x : 3_{n+1}}, \emptystate) \mid n \in \N \} \subseteq \footp{x=3}$ to give~\eqref{eq:1} again.
Then combining~\eqref{eq:1} with $\footp{x:=3}$ gives us $\footp{P_2}_{\Lambda_2} = \{ (\state{x:v,\bar x:v'_n}, \state{x:3,\bar x:3_{n}}) \mid v, v' \in V \aand n \in \N \}$\@.
Finally, we get $\footp{P}_{\Lambda} = \{ (\state{x:v,\bar x:v_0}, \state{x:3,\bar x:v_0}) \mid v \in V \}$ by combining these footprints using the \rn{Par} rule.

One can show by induction on the rules defining footprints that:

\begin{prop}%
  \label{prop:6}
  For all $P \in \poms(\Atso)$ and $\Lambda \in \Gamma(P)$, if $\footp{P}_\Lambda \neq \emptyset$, then $P$ is SP\@.
  \qed%
\end{prop}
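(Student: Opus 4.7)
The plan is to proceed by rule induction on a derivation witnessing that some footstep lies in $\footp{P}_\Lambda$. Since $\footp{P}_\Lambda$ is defined as the smallest set closed under \rn{Act}, \rn{Seq}, and \rn{Par}, the hypothesis $\footp{P}_\Lambda \neq \emptyset$ guarantees that some footstep admits a finite-height derivation from these three rules, and I would induct on the height of any such derivation.

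For the base case, the last (and only) rule applied must be \rn{Act}, whose side condition forces $P = \{\lambda\}$ for some action $\lambda$. A singleton pomset is linear, and linear pomsets are SP by the definition recalled on page~\pageref{pars:sp-explan}, so $P$ is SP.

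For the inductive step, observe that \rn{Seq} and \rn{Par} are structural on $P$. If the last rule applied is \rn{Seq}, its side condition forces $P = P_1;P_2$, and the derivation contains strictly shorter sub-derivations producing footsteps in $\footp{P_1}_{\Lambda_1}$ and $\footp{P_2}_{\Lambda_2}$; each $P_i$ is SP by the induction hypothesis, and closure of SP under sequential composition gives that $P$ itself is SP. The \rn{Par} case is symmetric: the rule forces $P = P_1 \parop P_2$ and yields non-empty $\footp{P_i}_{\Lambda_i}$ to which the induction hypothesis applies.

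I do not expect any genuine obstacle here. The one point worth a careful line is that the side conditions of \rn{Seq} and \rn{Par} concerning the $\Lambda_i$, buffer-emptiness via $\zeta$, and state compatibility via $\Uparrow$ never interfere with the argument: they merely restrict which concrete footsteps enter the set, but whenever some footstep is added through \rn{Seq} or \rn{Par}, the rule itself exhibits $P$ as a sequential or parallel composition of two pomsets already carrying non-empty footprints, which is precisely the data the induction hypothesis consumes.
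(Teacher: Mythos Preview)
Your proposal is correct and matches the paper's approach: the paper states that the result follows ``by induction on the rules defining footprints,'' which is exactly the rule induction you carry out, and your case analysis for \rn{Act}, \rn{Seq}, and \rn{Par} fills in the routine details the paper omits.
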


\noindent The following proposition shows that the execution of a pomset $P$ in a global-write environment $\Lambda$ can be thought of as executions of parts of $P$ interspersed with environment steps from $\Lambda$\@.
Its corollary shows how to efficiently compute the footprint of linear pomsets.

\begin{prop}\label{prop:2}
  If $P = P_1;\dotsc;P_n$ and $\Lambda = E_0;\Lambda_1;E_1;\dotsb;\Lambda_n;E_n \in \Gamma(P)$ such that $\Lambda_i \in \Gamma(P_i)$ and $E_i \in \Ls$ for all $i$, then \[
    \footp{P}_\Lambda = \footp{E_0}^* \seqop \footp{P_1}_{\Lambda_1} \seqop \footp{E_1}^* \seqop \dotsb \seqop \footp{P_n}_{\Lambda_n} \seqop \footp{E_n}^*.
  \]
  Without loss of generality, the $\Lambda_i$ and $E_i$ can be chosen such that the first and last elements of $\Lambda_i$ are in $P_i$\@.
  \qed%
\end{prop}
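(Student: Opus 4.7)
The plan is to prove the equality by first establishing a factorisation lemma isolating the effect of global writes that lie outside the ``span'' of a pomset, and then iterating it. Specifically, I would prove by structural induction on $P$ (justified by Proposition~\ref{prop:6}, since otherwise both sides are empty) the following: for any $P \in \poms(\Atso)$, $\Lambda \in \Gamma(P)$, and $E \in \Ls$, the equalities $\footp{P}_{\Lambda;E} = \footp{P}_\Lambda \seqop \footp{E}^*$ and $\footp{P}_{E;\Lambda} = \footp{E}^* \seqop \footp{P}_\Lambda$ hold. A trivial preliminary observation --- that $\footp{L_1;L_2}^* = \footp{L_1}^* \seqop \footp{L_2}^*$, immediate from the inductive definition of $\footp{\cdot}^*$ --- handles the bookkeeping throughout.

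For the \rn{Act} case with $P = \{\lambda\}$ and $\Lambda = \Lambda_1;\{\lambda\};\Lambda_2$, we have $\Lambda;E = \Lambda_1;\{\lambda\};(\Lambda_2;E)$; applying \rn{Act} to both sides and using the preliminary observation gives the equality. The \rn{Seq} case decomposes $P = Q_1;Q_2$ with $\Lambda = M_1;M_2$ and invokes the inductive hypothesis on $Q_2$ under the extended environment $M_2;E$. The \rn{Par} case is the delicate one: since $E$ consists entirely of global writes, it survives both restrictions used by \rn{Par}, so appending $E$ to $\Lambda$ appends $E$ to both $\Lambda_1$ and $\Lambda_2$; because $\footp{E}^*$ only perturbs the global portion of the state and leaves buffers untouched, after applying the inductive hypothesis to each parallel branch the $\zeta$-emptiness and consistency hypotheses of \rn{Par} propagate cleanly.

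With the factorisation lemma in hand, the proposition follows by induction on $n$. The base $n = 1$ reduces to two applications of the factorisation lemma. For the inductive step, I would write $P = (P_1;\cdots;P_n);P_{n+1}$, split $\Lambda$ as $(E_0;\Lambda_1;\cdots;\Lambda_n;E_n);(\Lambda_{n+1};E_{n+1})$, apply the \rn{Seq} rule together with the factorisation lemma to shuffle the boundary $\footp{E_n}^*$ into its correct position, and invoke the inductive hypothesis on the grouped prefix. The ``without loss of generality'' clause is then a final appeal to the factorisation lemma: any global write at the extremity of some $\Lambda_i$ can be absorbed into the adjacent $E_{i-1}$ or $E_i$.

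The main obstacle I anticipate is the $\subseteq$ direction of the factorisation lemma: the footprint rules generate $\footp{P}_{\Lambda;E}$ via possibly many derivations, and one must argue that any such derivation normalises into one that cleanly separates the $\Lambda$ contribution from the $E$ contribution. The \rn{Par} case in particular requires a careful check that the buffer-emptiness conditions of both branches remain matched after the $\footp{E}^*$ suffix --- this is ultimately harmless because $\footp{E}^*$ leaves the buffer component unchanged, but it needs explicit verification.
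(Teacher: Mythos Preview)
The paper does not actually give a proof of this proposition: it is stated with a terminal $\qed$ and no accompanying argument, so there is nothing in the paper to compare your proposal against. Your factorisation-lemma approach is a natural and sound way to supply the omitted details, and your identification of the \rn{Par} case and the derivation-normalisation issue as the delicate points is accurate.

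One place where your sketch could be tightened: in the inductive step for the main proposition you write ``apply the \rn{Seq} rule'' with the specific split $(P_1;\cdots;P_n);P_{n+1}$ and $(E_0;\cdots;E_n);(\Lambda_{n+1};E_{n+1})$. This gives only the $\supseteq$ inclusion. For the reverse inclusion you must argue that \emph{every} element of $\footp{P}_\Lambda$ arises from some \rn{Seq}-split of $\Lambda$ (the only non-degenerate rule applicable to a genuine sequential composition), and that all such splits --- which differ only in where they cut inside some $E_j$ --- yield the same set. The second point follows from your factorisation lemma together with $\footp{E';E''}^* = \footp{E'}^* \seqop \footp{E''}^*$, but it is a distinct step from the $\subseteq$ direction of the factorisation lemma itself, which you discuss only for the lemma and not for the proposition. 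Making this explicit would close the loop.
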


\begin{cor}\label{cor:1}
  If $P = \lst{\lambda_1,\dotsc,\lambda_n}$ and $\Lambda = E_0;\{\lambda_1\};E_1;\dotsb;\{\lambda_n\};E_n \in \Gamma(P)$, then
  \begin{equation*}
    \footp{P}_\Lambda = \footp{E_0}^* \seqop \footp{\lambda_1} \seqop \footp{E_1}^* \seqop \dotsb \seqop \footp{\lambda_n} \seqop \footp{E_n}^*.
    \tag*{\qed}
  \end{equation*}
\end{cor}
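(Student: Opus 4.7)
The plan is to apply Proposition~\ref{prop:2} directly, using the fact that each singleton pomset $\{\lambda_i\}$ is its own linearisation, and then to discharge the resulting $\footp{\{\lambda_i\}}_{\{\lambda_i\}}$ factors by showing each equals $\footp{\lambda_i}$.

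First I would write $P = \{\lambda_1\};\dotsc;\{\lambda_n\}$, so that $P$ is exhibited as the sequential composition of the singleton pomsets $P_i = \{\lambda_i\}$. The given $\Lambda \in \Gamma(P)$ has the form $E_0;\{\lambda_1\};E_1;\dotsb;\{\lambda_n\};E_n$, so setting $\Lambda_i = \{\lambda_i\}$ yields $\Lambda_i \in \Gamma(P_i)$ trivially (each $\Lambda_i$ is a linearisation of $P_i \parop \emptylist$), and the hypotheses of Proposition~\ref{prop:2} are met, with the first and last elements of each $\Lambda_i$ lying in $P_i$ by construction. Proposition~\ref{prop:2} then gives
\[
  \footp{P}_\Lambda = \footp{E_0}^* \seqop \footp{\{\lambda_1\}}_{\{\lambda_1\}} \seqop \footp{E_1}^* \seqop \dotsb \seqop \footp{\{\lambda_n\}}_{\{\lambda_n\}} \seqop \footp{E_n}^*.
\]

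The remaining step is to identify $\footp{\{\lambda_i\}}_{\{\lambda_i\}}$ with $\footp{\lambda_i}$. Since $\{\lambda_i\}$ is a single-action pomset and $\Lambda_i = \{\lambda_i\}$, the only applicable rule is \rn{Act}, which forces the decomposition $\Lambda_i = \emptylist;\{\lambda_i\};\emptylist$, so $\footp{\{\lambda_i\}}_{\{\lambda_i\}} = \footp{\emptylist}^* \seqop \footp{\lambda_i} \seqop \footp{\emptylist}^*$. From the definition of $\footp{\cdot}^*$ we have $\footp{\emptylist}^* = \{(\emptystate, \emptystate)\}$, and $(\emptystate,\emptystate)$ is the two-sided unit for $\seqop$ (the consistency condition is vacuous, and updating by or with $\emptystate$ is the identity). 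Substituting into the displayed equation gives the claimed identity.

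There is no genuine obstacle here beyond bookkeeping: the only point worth being careful about is verifying that $(\emptystate,\emptystate)$ really does act as a two-sided unit for $\seqop$, which I would check once and for all by unfolding the definition $(\sigma_1,\tau_1) \seqop (\sigma_2,\tau_2) = (\sigma_1 \cup (\sigma_2\setminus\dom(\tau_1)), \supd{\tau_1}{\tau_2})$ with either operand equal to $(\emptystate,\emptystate)$.
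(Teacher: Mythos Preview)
Your proposal is correct and follows exactly the intended route: the paper presents this as an immediate corollary of Proposition~\ref{prop:2} with no further argument, and your instantiation $P_i = \Lambda_i = \{\lambda_i\}$ together with the \rn{Act}-rule computation of $\footp{\{\lambda_i\}}_{\{\lambda_i\}}$ is precisely the unpacking that the paper leaves implicit. The only minor remark is that when you assert \rn{Act} is ``the only applicable rule,'' you might note that the \rn{Seq} and \rn{Par} rules can in principle be instantiated with a trivial decomposition involving the empty pomset, but contribute nothing because $\footp{\emptypset}_\emptylist = \emptyset$; this is harmless and does not affect your conclusion.
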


\noindent We can similarly decompose parallel compositions of pomsets when computing footprints.

\begin{prop}\label{prop:8}
  If $P = P_1 \parop \dotsb \parop P_n$, $\Lambda \in \Gamma(P)$, and for all $i$, $\Lambda_i \in \Gamma(P_i)$ is obtained from $\Lambda$ by deleting the read and buffer write actions not in $P_i$, then
  \[ \footp{P}_\Lambda = \footp{P_1}_{\Lambda_1} \parop \dotsb \parop\, \footp{P_n}_{\Lambda_n}, \]
  where the associative binary operator $\parop$ on subsets of $\Sigma\times\Sigma$ is given by
  \begin{equation*}
  S_1 \parop S_2 = \{ (\sigma_1 \cup \sigma_2, \tau_1 \cup \tau_2) \mid (\sigma_i, \tau_i) \in S_i, \zeta(\sigma_i), \zeta(\tau_i), \scons{\sigma_1}{\sigma_2}, \scons{\tau_1}{\tau_2} \}.
  \tag*{\qed}
  \end{equation*}
\end{prop}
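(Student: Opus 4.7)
The plan is to induct on $n$, reducing to the binary case via the associativity of $\parop$ (noted in the paper for pomsets, and extending to footprint sets by a direct check of the binary operator's definition). The base $n = 1$ is immediate since $\Lambda_1 = \Lambda$. For the inductive step, I would write $P = P_1 \parop Q$ with $Q = P_2 \parop \dotsb \parop P_n$, apply the binary case, and then invoke the induction hypothesis on $Q$. The restrictions $\Lambda_i$ agree with those in the statement because deletion of reads and buffer writes from $\Lambda$ commutes across components.

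For the binary case, the $\supseteq$ inclusion is immediate from the \rn{Par} rule of the footprint definition: the $\Lambda_i$ in the statement match those described by the rule, and the binary $\parop$ on subsets of $\Sigma \times \Sigma$ precisely encodes the side conditions $\zeta(\sigma_i)$, $\zeta(\tau_i)$, $\scons{\sigma_1}{\sigma_2}$, together with $\scons{\tau_1}{\tau_2}$ (implicit in the rule so that $\tau_1 \cup \tau_2$ is a well-defined function).

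For the $\subseteq$ inclusion, I would induct on the derivation of $(\sigma, \tau) \in \footp{P_1 \parop P_2}_\Lambda$ in the least set closed under \rn{Act}, \rn{Seq}, \rn{Par}. The \rn{Act} case is degenerate (it forces $P_1 \parop P_2$ to reduce to a singleton under the $\delta$-identifications). A last step of \rn{Par} yields the claim directly. The essential case is \rn{Seq} applied with some decomposition $P_1 \parop P_2 = Q_1; Q_2$: by Proposition~\ref{prop:6} the pomset is SP, and using the essential uniqueness of its SP decomposition, such $Q_1, Q_2$ must themselves decompose as $Q_1 = R_1 \parop R_2$ and $Q_2 = S_1 \parop S_2$ with $P_i = R_i; S_i$. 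An exchange law, informally $(X_1 \parop X_2) \seqop (Y_1 \parop Y_2) \subseteq (X_1 \seqop Y_1) \parop (X_2 \seqop Y_2)$ on footprint sets, then places the resulting element in $\footp{P_1}_{\Lambda_1} \parop \footp{P_2}_{\Lambda_2}$.

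The main obstacle is establishing this exchange law and reconciling it with the identification of pomsets up to isomorphism and modulo $\delta$. The law follows from unfolding $\seqop$ and $\parop$ and using the disjointness of buffer supports enforced by the $\zeta$ predicate, but the case analysis at the sequential break point must be handled carefully to ensure consistencies and buffer cardinalities line up. The pomset identifications add bookkeeping (one needs that the SP decomposition of $P_1 \parop P_2$ is unique in the sense required) but should not affect the structural argument, since footprints are already invariant under associativity.
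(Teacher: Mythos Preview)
The paper states this proposition without proof (the \qed\ at the end of the statement is all that is given), so there is no proof to compare against directly. Your overall plan---reduce to the binary case by induction on $n$, get $\supseteq$ for free from \rn{Par}, and prove $\subseteq$ by induction on the derivation---is the natural one and is correct in outline.

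However, your treatment of the \rn{Seq} case is both more complicated than necessary and rests on a false structural claim. You assert that if $P_1 \parop P_2 = Q_1; Q_2$ then ``by essential uniqueness of the SP decomposition'' one can write $Q_1 = R_1 \parop R_2$ and $Q_2 = S_1 \parop S_2$ with $P_i = R_i; S_i$. This is not how series--parallel decompositions behave: a pomset that is a non-trivial parallel composition is \emph{never} a non-trivial sequential composition. Concretely, if $a \in Q_1 \cap P_1$ and $b \in Q_2$, then $a <_P b$, so $b$ cannot lie in $P_2$; hence $Q_2 \subseteq P_1$, and then any $c \in P_2 \subseteq Q_1$ would have to be below every element of $Q_2 \subseteq P_1$, which is impossible. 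So when both $P_i$ are non-empty, the only sequential decompositions $Q_1; Q_2$ of $P_1 \parop P_2$ are the trivial ones with one $Q_j$ empty (or, modulo the $\delta$-identification, equal to $\{\delta\}$). In either trivial case the \rn{Seq} contribution reduces immediately: the empty pomset has empty footprint, and $\footp{\{\delta\}}_{\Lambda'} = \footp{\Lambda'}^*$ is absorbed by Proposition~\ref{prop:2}. Thus the ``exchange law'' you flag as the main obstacle is never actually needed, and the case analysis you propose at the sequential break point does not arise. Once you make this observation, the $\subseteq$ direction follows essentially directly from \rn{Par}, and the whole proof becomes a short structural remark rather than a delicate argument.
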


\noindent Finally, we can characterize the footprints of the above ``environment steps'' using the following lemma, which can be shown by induction on the length of $\Lambda$:

\begin{prop}%
  \label{prop:12}
  For all $\Lambda \in \Ls$ and $(\sigma, \tau) \in \footp{\Lambda}^*$, we have $x := v \in \Lambda$ for some $v$ if and only if $x \in \dom(\sigma)$\@.
  For no $\bar x \in \BLoc$ do we have $\bar x \in \dom(\sigma) \cup \dom(\tau)$\@.
  Moreover, $\dom(\sigma) = \dom(\tau)$\@.
  For all $x \in \dom(\sigma)$ and $v \in V$, we have $(\supd{\sigma}{x:v}, \tau) \in \footp{\Lambda}^*$\@.
  If $x := v$ is the maximal write to $x$ in $\Lambda$, then $\tau(x) = v$\@.
  \qed%
\end{prop}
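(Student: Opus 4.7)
The proof goes by induction on the length of $\Lambda$. Since $\Lambda \in \Ls$, the base case $\Lambda = \emptylist$ has $\footp{\emptylist}^* = \{(\emptystate, \emptystate)\}$, so all five clauses hold trivially. For the inductive step, write $\Lambda = \lst{x:=v};\Lambda'$. Unfolding the definition of $\footp{\cdot}^*$ shows that every $(\sigma, \tau) \in \footp{\Lambda}^*$ arises as $([x:v'], [x:v]) \seqop (\sigma', \tau')$ for some $v' \in V$ and $(\sigma', \tau') \in \footp{\Lambda'}^*$, subject to the consistency requirement that $x \in \dom(\sigma')$ implies $\sigma'(x) = v$. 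Explicitly, $\sigma = [x:v'] \cup (\sigma' \setminus \{x\})$ and $\tau = \supd{[x:v]}{\tau'}$.

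Clauses (1) and (2) follow directly: $\dom(\sigma) = \{x\} \cup \dom(\sigma')$ and $x \in \Loc$, so the IH applied to $\Lambda'$ identifies $\dom(\sigma)$ with the set of locations written in $\Lambda$ and rules out any buffer location in $\dom(\sigma) \cup \dom(\tau)$. For clause (3), combine $\dom(\sigma') = \dom(\tau')$ (IH) with a case split on whether $x \in \dom(\tau')$; in both cases $\dom(\sigma) = \dom(\tau)$ results. For clause (4), given $y \in \dom(\sigma)$ and $u \in V$, I construct an explicit witness: when $y = x$, swap the leading pair for $([x:u], [x:v])$, leaving the consistency condition intact; when $y \neq x$, invoke the IH on $\Lambda'$ to obtain $(\supd{\sigma'}{y:u}, \tau') \in \footp{\Lambda'}^*$ and prepend $([x:v'], [x:v])$. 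In either case, the resulting sequenced pair is exactly $(\supd{\sigma}{y:u}, \tau)$.

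For clause (5), assume $y := u$ is the maximal write to $y$ in $\Lambda$. If $y \neq x$, then $y := u$ is also maximal in $\Lambda'$; the IH yields $\tau'(y) = u$, and since $\supd{[x:v]}{\tau'}$ alters only position $x$, also $\tau(y) = u$. If $y = x$, maximality forces $\Lambda'$ to contain no write to $x$; IH clauses (1) and (3) then give $x \notin \dom(\tau')$, hence $\tau = [x:v] \cup \tau'$ and $\tau(x) = v = u$. The main obstacle is the bookkeeping around $\supd{[x:v]}{\tau'}$, which collapses to $\tau'$ or to $[x:v] \cup \tau'$ depending on whether $x \in \dom(\tau')$; IH clauses (1) and (3) convert this dichotomy into the cleaner statement ``does $x$ appear in $\Lambda'$'', after which the remaining case analyses for (3) and (5) reduce to routine manipulations of partial functions.
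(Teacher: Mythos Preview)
Your approach---induction on the length of $\Lambda$---is exactly what the paper indicates (the paper states only that the proposition ``can be shown by induction on the length of $\Lambda$'' and gives no further detail), and your handling of clauses (1)--(4) is correct.

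There is, however, a small gap in your case analysis for clause~(5). You split on whether the \emph{location} $y$ coincides with the location $x$ of the head write, and in the case $y = x$ you assert that ``maximality forces $\Lambda'$ to contain no write to $x$''. That inference is not valid: if $y = x$ and $\Lambda'$ also contains writes to $x$, then the maximal write $x := u$ to $x$ in $\Lambda$ lies in $\Lambda'$, not at the head. In that subcase your conclusion $u = v$ and $x \notin \dom(\tau')$ fails. The fix is simply to refine the split: either the maximal write to $y$ lies in $\Lambda'$ (then it is also maximal in $\Lambda'$, the IH gives $\tau'(y) = u$, and since $y \in \dom(\tau')$ you get $\tau(y) = u$ regardless of whether $y = x$), or it is the head element (then $y = x$, $u = v$, $\Lambda'$ has no write to $x$, and your argument goes through). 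With this correction the proof is complete and matches the paper's intended argument.
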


To better understand and characterize pomset executions, we introduce the \defin{characteristic} $\charac{x}{P} \in \N \times \N$ of a location $x$ in a series-parallel pomset $P$\@.
Intuitively, when $\charac{x}{P} = (g, b)$, $g$ is the number of global writes to $x$ in $P$ that do not have a buffer entry witnessing them, and $b$ describes the number of buffer entries left to be flushed after executing $P$\@.
The characteristic is inductively given on the structure of $P$ as follows:
\begin{align*}
  \charac{x}{\{y:=v\}} &=
                         \begin{cases}
                           (1,0) & x = y\\
                           (0,0) & x \neq y
                         \end{cases}\\
  \charac{x}{\{\bar y := v\}} &=
                                \begin{cases}
                                  (0,1) & x = y\\
                                  (0,0) & x \neq y
                                \end{cases}\\
  \charac{x}{y = v} &= (0,0)\\
  \charac{x}{\delta} &= (0,0)\\
  \charac{x}{P_1 \parop P_2} &= (g_1 + g_2, b_1 + b_2)\\
  \charac{x}{P_1; P_2} &= (g_1 + \maxo{g_2 - b_1}, b_2 + \maxo{b_1 - g_2}),
\end{align*}
where $\charac{x}{P_i} = (g_i, b_i)$\@.
It is not hard to show that the characteristic $\charac{x}{P}$ is well-defined.

As shown in Proposition~\ref{prop:4} below, the concept of characteristic is closely related to that of the \textit{differential.}
Where $g_x$ and $b_x$ are the number of global and buffer write actions to $x$ in a pomset $P$, respectively, call $\differ{x}{P} = b_x - g_x$ the \defin{differential} of $x$ in $P$\@.

\begin{prop}%
  \label{prop:4}
  For all $P \in \poms(\Atso)$, if $\charac{x}{P} = (g, b)$, then $\differ{x}{P} = b - g$\@.
  \qed%
\end{prop}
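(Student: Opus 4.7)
The plan is to prove the identity by structural induction on the pomset $P$, following the inductive definition of $\charac{x}{\cdot}$. First I would dispatch the atomic base cases $\{y:=v\}$, $\{\bar y:=v\}$, $\{y=v\}$, and $\{\delta\}$: in each case the claim is immediate by comparing the defining clause of $\charac{x}{P}$ with the direct counts $g_x$ and $b_x$ of global and buffer write actions to $x$ in $P$ (for instance, when $P = \{y:=v\}$ and $x=y$, both the clause and the direct count give $g_x = 1$, $b_x = 0$, and $b-g = -1$).

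For the parallel case, I would invoke the evident additivity of $\differ{x}{\cdot}$ under parallel composition, since the actions of $P_1 \parop P_2$ partition into those of $P_1$ and those of $P_2$. Writing $\charac{x}{P_i} = (g_i, b_i)$, the inductive hypothesis gives $\differ{x}{P_i} = b_i - g_i$, so
\[
  \differ{x}{P_1 \parop P_2} = (b_1 - g_1) + (b_2 - g_2) = (b_1 + b_2) - (g_1 + g_2),
\]
which matches the recursive clause for $\charac{x}{P_1 \parop P_2}$.

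The sequential case is the only one that requires a small algebraic step. From the defining clause,
\[
  \charac{x}{P_1;P_2} = \bigl(g_1 + \maxo{g_2 - b_1},\; b_2 + \maxo{b_1 - g_2}\bigr),
\]
so I must show $(b_2 + \maxo{b_1 - g_2}) - (g_1 + \maxo{g_2 - b_1}) = \differ{x}{P_1;P_2}$. The key identity is
\[
  \maxo{b_1 - g_2} - \maxo{g_2 - b_1} = b_1 - g_2,
\]
verified by a case split on whether $b_1 \geq g_2$ or $b_1 < g_2$. Substituting, the left-hand side collapses to $(b_1 + b_2) - (g_1 + g_2)$, which equals $\differ{x}{P_1;P_2}$ by additivity of $\differ{x}{\cdot}$ under sequential composition together with the inductive hypothesis applied to $P_1$ and $P_2$. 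This two-case arithmetic identity is the one piece of the argument that is not purely definitional, and it constitutes the only real obstacle in the proof.
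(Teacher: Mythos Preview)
Your proposal is correct and is exactly the natural structural induction the paper has in mind; the paper omits the proof entirely (the statement is marked with a bare \qed), so there is nothing to compare against beyond noting that your argument is the intended one.
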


\noindent The following proposition tells us that the number of buffer entries for $x$ in the final state of a footstep of a pomset $P$ is determined by the differential of $x$ in $P$ and by the number of buffer entries for $x$ in the initial state.
For compactness, let $\sigma[\bar x] = n$ whenever $\sigma(\bar x) = v_n$\@.
\begin{prop}%
  \label{prop:3}
  For all $P \in \poms(\Atso)$, $\Lambda \in \Gamma(P)$, $(\sigma, \tau) \in \footp{P}_\Lambda$, and $\bar x \in \dom(\sigma)$, if $\bar x \in \dom(\tau)$, then $\tau[\bar x] = \sigma[\bar x] + \differ{x}{P}$\@.
  \qed%
\end{prop}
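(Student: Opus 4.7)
The plan is to prove the proposition by induction on the derivation of $(\sigma, \tau) \in \footp{P}_\Lambda$, so that the three rules \rn{Act}, \rn{Seq}, \rn{Par} correspond directly to inductive steps. Before starting the main induction I would dispatch two bookkeeping lemmas by the same kind of induction: (i) a \emph{domain inclusion}, $\dom(\tau) \cap \BLoc \subseteq \dom(\sigma)$ for every $(\sigma, \tau) \in \footp{P}_\Lambda$; and (ii) a \emph{witness} lemma stating that if $P$ contains no buffer or global write to $x$ then $\differ{x}{P} = 0$ tautologically, whereas any such write forces $\bar x \in \dom(\tau)$---so contrapositively, $\bar x \notin \dom(\tau)$ already gives $\differ{x}{P} = 0$. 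Both fall out from the base footprints of the four action types together with the definitions of $\seqop$ and $\parop$, using Proposition~\ref{prop:12} for the environment factors.

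For the \rn{Act} base case, with $P = \{\lambda\}$ and $\Lambda = \Lambda_1;\{\lambda\};\Lambda_2$, Proposition~\ref{prop:12} asserts $\bar x \notin \dom(\sigma') \cup \dom(\tau')$ for every $(\sigma', \tau') \in \footp{\Lambda_i}^*$, so any buffer-location contribution to $(\sigma,\tau)$ must come from $\footp{\lambda}$. Among the four actions, only $\lambda = \bar x{:=}v$ and $\lambda = x{:=}v$ place $\bar x$ in both domains, and the explicit footprints give $\tau[\bar x] - \sigma[\bar x] = +1$ and $-1$ respectively, matching $\differ{x}{\{\lambda\}}$.

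For the \rn{Seq} case with $P = P_1;P_2$ and $(\sigma,\tau) = (\sigma_1,\tau_1) \seqop (\sigma_2,\tau_2)$, I use $\differ{x}{P_1;P_2} = \differ{x}{P_1} + \differ{x}{P_2}$ (immediate from the definition of $\differop$) and split on where $\bar x$ lies in $\dom(\tau_1), \dom(\tau_2)$. If $\bar x \in \dom(\tau_1) \cap \dom(\tau_2)$, Lemma~(i) yields $\bar x \in \dom(\sigma_1) \cap \dom(\sigma_2)$, and the consistency condition $\scons{\supd{\sigma_1}{\tau_1}}{\sigma_2}$ forces $\tau_1[\bar x] = \sigma_2[\bar x]$; chaining the induction hypothesis on $P_1$ and $P_2$ closes this subcase. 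If $\bar x \in \dom(\tau_1) \setminus \dom(\tau_2)$, Lemma~(ii) gives $\differ{x}{P_2} = 0$ and only the induction hypothesis on $P_1$ is needed. The symmetric case $\bar x \in \dom(\tau_2) \setminus \dom(\tau_1)$ uses Lemma~(ii) to get $\differ{x}{P_1} = 0$, and tracing $\sigma[\bar x]$ through $\sigma_1 \cup (\sigma_2 \setminus \dom(\tau_1))$---invoking consistency when $\bar x$ lies in both $\dom(\sigma_1)$ and $\dom(\sigma_2)$---identifies $\sigma[\bar x]$ with $\sigma_2[\bar x]$ and finishes.

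For \rn{Par} with $P = P_1 \parop P_2$, the constraints $\zeta(\sigma_i), \zeta(\tau_i)$ force $\sigma[\bar x] = \tau[\bar x] = 0$, so we reduce to showing $\differ{x}{P_1} + \differ{x}{P_2} = 0$. For each $i$, either $P_i$ contains no writes to $x$ (so $\differ{x}{P_i} = 0$ directly) or it does, in which case Lemma~(ii) places $\bar x \in \dom(\tau_i)$ and Lemma~(i) places $\bar x \in \dom(\sigma_i)$, so the induction hypothesis gives $\differ{x}{P_i} = \tau_i[\bar x] - \sigma_i[\bar x] = 0 - 0 = 0$. The main obstacle is the \rn{Seq} case: juggling the subcases and keeping the consistency constraint $\scons{\supd{\sigma_1}{\tau_1}}{\sigma_2}$ straight in order to identify $\sigma[\bar x]$ with whichever of $\sigma_1[\bar x]$ or $\sigma_2[\bar x]$ is defined. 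The two preparatory lemmas are designed precisely to absorb this bookkeeping.
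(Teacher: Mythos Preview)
Your proposal is correct and follows the natural route---induction on the derivation of $(\sigma,\tau)\in\footp{P}_\Lambda$---which is precisely what the paper intends (the proposition is stated with a bare \qed, so no explicit argument is given). Your two preparatory lemmas are exactly the right bookkeeping: lemma~(i) is a straightforward domain-tracking claim, and lemma~(ii) is essentially the ``only if'' direction of Proposition~\ref{prop:5}(3), which the paper also proves by the same induction. The case split in \rn{Seq} is handled cleanly, the use of the consistency condition $\scons{\supd{\sigma_1}{\tau_1}}{\sigma_2}$ to identify $\tau_1[\bar x]$ with $\sigma_2[\bar x]$ (or $\sigma_1[\bar x]$ with $\sigma_2[\bar x]$ in the third subcase) is correct, and the \rn{Par} case is dispatched exactly as one would expect given the $\zeta$ constraints.
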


\begin{cor}%
  \label{cor:2}
  Let $P \in \poms(\Atso)$, $\Lambda \in \Gamma(P)$, $(\sigma, \tau) \in \footp{P}_\Lambda$, and $\charac{x}{P} = (g, b)$\@.
  Then $\tau[\bar x] = \sigma[\bar x] + b - g$\@.
  In particular, when $g = 0$, then $\tau[\bar x] = \sigma[\bar x] + b$\@.
  \qed%
\end{cor}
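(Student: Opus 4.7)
The plan is to derive Corollary~\ref{cor:2} as an immediate consequence of chaining together Proposition~\ref{prop:3} and Proposition~\ref{prop:4}, both of which are stated just above. The corollary is essentially a reformulation of Proposition~\ref{prop:3} in which the differential $\differ{x}{P}$ is rewritten using the characteristic $\charac{x}{P}$.

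Concretely, first I would invoke Proposition~\ref{prop:3} to the given data $P$, $\Lambda$, $(\sigma,\tau) \in \footp{P}_\Lambda$, and $\bar x \in \dom(\sigma)$ (where we may assume $\bar x \in \dom(\tau)$ for the equation to be meaningful, matching the hypothesis of Proposition~\ref{prop:3}), obtaining $\tau[\bar x] = \sigma[\bar x] + \differ{x}{P}$. Then I would apply Proposition~\ref{prop:4} with $\charac{x}{P} = (g,b)$ to conclude $\differ{x}{P} = b - g$. Substituting into the previous equation yields $\tau[\bar x] = \sigma[\bar x] + b - g$, which is exactly the first assertion. The ``in particular'' clause is then obtained by specializing to the case $g = 0$, giving $\tau[\bar x] = \sigma[\bar x] + b$.

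There is no real obstacle here: both ingredients are already proved (or marked as routine via the $\qed$ boxes on the propositions above), and the corollary is purely a substitution. The only minor point to be careful about is the domain condition from Proposition~\ref{prop:3}, namely that the differential formula applies when $\bar x \in \dom(\sigma) \cap \dom(\tau)$; this should be stated as the implicit hypothesis of the corollary so that $\tau[\bar x]$ is itself well-defined. Apart from that caveat, the proof is a one-line composition of the two preceding propositions.
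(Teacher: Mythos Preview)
Your proposal is correct and matches the paper's intent: the corollary is marked with a bare \qed\ in the paper, meaning its proof is omitted as immediate, and the only sensible derivation is exactly the substitution of Proposition~\ref{prop:4} into Proposition~\ref{prop:3} that you describe. Your remark about the implicit domain hypothesis $\bar x \in \dom(\sigma)\cap\dom(\tau)$ is a fair observation.
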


\noindent The following lemma is useful in bounding the characteristic of $P$\@.
In particular, it implies that if there exists a $\Lambda \in \Gamma(P)$ and a $(\sigma, \tau) \in \footp{P}_\Lambda$ such that $\zeta(\sigma)$ and $\zeta(\tau)$, then $\charac{x}{P} = (0, 0)$\@.

\begin{lem}%
  \label{lemma:10}
  Let $P \in \poms(\Atso)$, $\Lambda \in \Gamma(P)$, $(\sigma, \tau) \in \footp{P}_\Lambda$, and $\charac{x}{P} = (g, b)$\@.
  Then $\sigma[\bar x] \geq g$, and if $\bar x \notin \dom(\sigma)$, then $g = 0$\@.
  Moreover, $\tau[\bar x] \geq b$, and if $\bar x \notin \dom(\tau)$, then $b = 0$\@.
  \qed%
\end{lem}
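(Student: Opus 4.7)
The plan is to proceed by induction on the derivation of $(\sigma, \tau) \in \footp{P}_\Lambda$ using the three formation rules \rn{Act}, \rn{Seq}, and \rn{Par}. This is well-suited to the statement because both the characteristic $\charac{x}{P}$ and the footprint are defined by matching structural recursions, and the four inequalities can be proved in parallel.

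For the base case \rn{Act}, where $P = \{\lambda\}$ and $\footp{P}_\Lambda = \footp{\Lambda_1}^* \seqop \footp{\lambda} \seqop \footp{\Lambda_2}^*$, I would first invoke Proposition~\ref{prop:12} to observe that no buffer location lies in the domain of either component of any footstep of $\footp{\Lambda_i}^*$. Thus the buffer location $\bar x$ occurs in $\sigma$ or $\tau$ only if it occurs in the corresponding component of some $(\sigma', \tau') \in \footp{\lambda}$, with $\sigma[\bar x] = \sigma'[\bar x]$ and $\tau[\bar x] = \tau'[\bar x]$. A direct case analysis over $\lambda$ against the definitions of $\footp{\lambda}$ and $\charac{x}{\{\lambda\}}$ then discharges the claim: the only actions that contribute to $\charac{x}{\cdot}$ are $x := v$ (forcing $\sigma[\bar x] = n+1 \geq 1 = g$ and $\tau[\bar x] = n \geq 0 = b$) and $\bar x := v$ (forcing $\sigma[\bar x] = n \geq 0 = g$ and $\tau[\bar x] = n+1 \geq 1 = b$).

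For the \rn{Par} case, where $(\sigma, \tau) = (\sigma_1 \cup \sigma_2, \tau_1 \cup \tau_2)$ with $\zeta(\sigma_i)$ and $\zeta(\tau_i)$, I use that $\zeta(\sigma_i)$ forces $\sigma_i[\bar x] = 0$ whenever $\bar x \in \dom(\sigma_i)$. Applying the induction hypothesis to each $P_i$ then pinches $g_i$ between $0$ and $\sigma_i[\bar x] = 0$; combined with the vacuous case $\bar x \notin \dom(\sigma_i)$, we get $g_i = 0$, so $g_1 + g_2 = 0$, which makes the $\sigma$-side of the inequality trivial. The $\tau$-side is symmetric. So this case is essentially free once the $\zeta$ hypotheses are unpacked.

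The main obstacle, and where I would spend most of the work, is the \rn{Seq} case. Here $(\sigma, \tau) = (\sigma_1 \cup (\sigma_2 \setminus \dom(\tau_1)), \supd{\tau_1}{\tau_2})$ with $\scons{\supd{\sigma_1}{\tau_1}}{\sigma_2}$, and the characteristic formula $\charac{x}{P_1;P_2} = (g_1 + \maxo{g_2 - b_1}, b_2 + \maxo{b_1 - g_2})$ forces a case split on the sign of $b_1 - g_2$. The plan is to do the $\sigma$-inequality first, splitting on whether $\bar x \in \dom(\sigma_1)$, whether $\bar x \in \dom(\tau_1)$, and whether $\bar x \in \dom(\sigma_2)$, and using Corollary~\ref{cor:2} to convert the induction hypothesis $\tau_1[\bar x] \geq b_1$ into a statement about $\sigma_1[\bar x]$ when needed. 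The key observation is that if $g_2 > b_1$ then, by the induction hypothesis for $P_2$, $\bar x \in \dom(\sigma_2)$ and $\sigma_2[\bar x] \geq g_2$; the consistency condition together with Corollary~\ref{cor:2} then propagates the extra $g_2 - b_1$ witnesses back into $\sigma_1$. The $\tau$-inequality follows a symmetric case split on $\dom(\tau_2)$ versus $\dom(\tau_1)$, using $\supd{\tau_1}{\tau_2}$ to express $\tau[\bar x]$ in terms of $\tau_2[\bar x]$ or $\tau_1[\bar x]$ and invoking Corollary~\ref{cor:2} on $P_2$ to account for the $b_1 - g_2$ adjustment. The contrapositive clauses (``if $\bar x \notin \dom(\sigma)$ then $g = 0$'', and similarly for $\tau$) fall out as a byproduct of tracking which subcases make $\bar x$ appear in each local state.
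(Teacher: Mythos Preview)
The paper does not spell out a proof for this lemma (it is stated with only a tombstone), so there is no paper argument to compare against; induction on the derivation is the natural route and matches how the surrounding results are handled.

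Your \rn{Seq} case has a real gap, though: the induction hypothesis in the form of the lemma is too weak. Take the subcase $\bar x \in \dom(\sigma_1)$, $\bar x \notin \dom(\tau_1)$, $g_2 > b_1$. The hypothesis on $P_1$ yields only $b_1 = 0$, not $g_1 = 0$; consistency then gives $\sigma[\bar x] = \sigma_1[\bar x] = \sigma_2[\bar x] \geq g_2$, but the target is $g = g_1 + g_2$, which you cannot reach unless you also know $g_1 = 0$. A dual problem appears on the $\tau$-side: to invoke Corollary~\ref{cor:2} on $P_2$ you need $\bar x \in \dom(\sigma_2)\cap\dom(\tau_2)$, yet nothing in the stated hypothesis excludes $\bar x \in \dom(\tau_2)\setminus\dom(\sigma_2)$. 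Both gaps close if you carry two easy extra invariants through the same induction: (i)~$\bar x \in \dom(\tau)$ implies $\bar x \in \dom(\sigma)$, and (ii)~$\bar x \notin \dom(\tau)$ implies $P$ contains no global or buffer write to $x$ (so $g = 0$ as well, not just $b = 0$). These are immediate in the \rn{Act} and \rn{Par} cases and stable under \rn{Seq}; with them your outline goes through, but without them the \rn{Seq} step does not close.
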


\begin{cor}%
  \label{cor:3}
  If $(\sigma, \tau) \in \footp{P_1 \parop P_2}_\Lambda$, then $\charac{x}{P_i} = (0, 0)$ and $\charac{x}{P} = (0, 0)$\@.
  \qed%
\end{cor}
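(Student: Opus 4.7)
The plan is to derive the conclusion directly from the \rn{Par} rule together with Lemma~\ref{lemma:10}. First, I observe that, modulo degenerate cases where one of the $P_i$ is the empty pomset (in which case its characteristic is $(0,0)$ trivially), the only generating rule that can place $(\sigma,\tau)$ into $\footp{P_1 \parop P_2}_\Lambda$ is \rn{Par}: the rule \rn{Act} requires the pomset to be a singleton, while \rn{Seq} requires a non-trivial sequential decomposition, both of which collapse to the trivial case when $P$ is genuinely presented as a parallel composition.

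Assuming \rn{Par} is the top-level rule, I obtain $(\sigma_i, \tau_i) \in \footp{P_i}_{\Lambda_i}$ satisfying $\zeta(\sigma_i)$ and $\zeta(\tau_i)$ for $i = 1,2$, with $\sigma = \sigma_1 \cup \sigma_2$ and $\tau = \tau_1 \cup \tau_2$. Writing $\charac{x}{P_i} = (g_i, b_i)$, I apply Lemma~\ref{lemma:10} to each component. If $\bar x \in \dom(\sigma_i)$, then $\zeta(\sigma_i)$ forces $\sigma_i[\bar x] = 0$, and combining this with the inequality $\sigma_i[\bar x] \geq g_i$ from the lemma yields $g_i = 0$. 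If $\bar x \notin \dom(\sigma_i)$, the lemma gives $g_i = 0$ outright. The symmetric argument using $\zeta(\tau_i)$ and $\tau_i[\bar x] \geq b_i$ forces $b_i = 0$. Hence $\charac{x}{P_i} = (0,0)$ for $i=1,2$, and then the parallel-composition clause in the definition of the characteristic yields $\charac{x}{P_1 \parop P_2} = (g_1 + g_2, b_1 + b_2) = (0,0)$.

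The bulk of the work is already carried by Lemma~\ref{lemma:10}; the only mildly delicate point is justifying that \rn{Par} is the rule responsible for $(\sigma,\tau)$, which is a routine structural observation about which of \rn{Act}, \rn{Seq}, \rn{Par} can match a genuine parallel composition, together with the trivial unit laws for $\emptypset$ under $\parop$.
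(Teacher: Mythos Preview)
Your proposal is correct and follows essentially the same route the paper intends: the sentence immediately preceding the corollary already spells out that Lemma~\ref{lemma:10} forces $\charac{x}{P}=(0,0)$ whenever a footstep with $\zeta(\sigma)$ and $\zeta(\tau)$ exists, and the \rn{Par} rule supplies exactly such footsteps $(\sigma_i,\tau_i)$ for each $P_i$; the parallel clause of the characteristic then gives the conclusion for $P$. Your only extra work is the structural case analysis on which rule can fire, which you could shortcut by invoking Proposition~\ref{prop:8} directly, but the argument is sound as written.
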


\noindent In certain cases, we can ``read off'' from $P$ and $\Lambda$ what certain values in a footstep should be.
We can show by induction on the derivation of $(\sigma, \tau) \in \footp{P}_\Lambda$ that:

\begin{prop}\label{prop:5}
  Let $P \in \poms(\Atso)$, $\Lambda \in \Gamma(P)$, $(\sigma, \tau) \in \footp{P}_\Lambda$ and let $\charac{x}{P} = (g, b)$\@.
  Let ${\{\rho^{(i)}\}}_{i=1..k}$ be the minimal reads ${\{x = v^{(i)}\}}_{i=1..k}$ from $x$ under $<_P$\@.
  \begin{enumerate}[label={(\arabic*)}, ref={(\arabic*)}]
  \item If there exist $i_1, \dotsc, i_l$, $l \geq 1$, such that there are no writes to $x$ or $\bar x$ in $\bigcup_{j=1}^l \lc{\rho^{(i_j)}}\Lambda$, then $\bar x \in \dom(\sigma)$, there exists a $v$ such that $v = v^{(i_1)} = \dotsb = v^{(i_l)}$, $\sigma(\bar x) = v_n$ for some $n \geq g$, and $\sigma(x) = v$ when $n = 0$\@.
  \item If there exists a maximal global write $x := v$ to $x$ under $<_\Lambda$ (it may be in $P$ or not), then $\tau(x) = v$\@.
    Moreover, $x \in \dom(\tau)$ if and only if there exists a global write to $x$ in $\Lambda$\@.
  \item If there exists a maximal buffer write $\bar x := v$ to $x$ under $<_\Lambda$, then $\tau(\bar x) = v_n$ for some $n \geq b$\@.
    Moreover, $\bar x \in \dom(\tau)$ if and only if there exists a global write to $x$ or a buffer write to $\bar x$ in $P$\@.
  \item If there exist no buffer writes $\bar x := v$ and $\bar x \in \dom(\tau)$, then where $\sigma(\bar x) = v_n$, we have $\tau(\bar x) = v_m$ for some $m$\@.
  \end{enumerate}
\end{prop}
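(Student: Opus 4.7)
\medskip
\noindent\textbf{Proof plan.}
The plan is to proceed by induction on the derivation of $(\sigma, \tau) \in \footp{P}_\Lambda$, treating the four claims simultaneously (they are mutually dependent in the \rn{Seq} case). The three rules \rn{Act}, \rn{Seq}, \rn{Par} give the three cases. Throughout, I will use Proposition~\ref{prop:12} to control the behaviour of the auxiliary $\footp{\Lambda'}^*$ terms, and Corollaries~\ref{cor:2} and~\ref{cor:3} to pin down what the characteristic $\charac{x}{\cdot}$ can be in each component.

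For the base case \rn{Act}, $P = \{\lambda\}$ and $\footp{P}_\Lambda = \footp{\Lambda_1}^* \seqop \footp{\lambda} \seqop \footp{\Lambda_2}^*$ for a unique split $\Lambda = \Lambda_1;\{\lambda\};\Lambda_2$. I will proceed by a short case analysis on $\lambda \in \Ar \cup \Aw \cup \Ab \cup \{\delta\}$. For (1), only $\lambda = x = v$ is relevant (otherwise there is no minimal read to $x$), and the hypothesis about the lower closure forces $\Lambda_1$ to contain no writes to $x$ or $\bar x$; by Proposition~\ref{prop:12}, $\footp{\Lambda_1}^*$ then leaves the $x$- and $\bar x$-entries of $\sigma$ untouched, and the footprint $\footp{x = v}$ forces the two shapes $\sigma(\bar x) = v_0$ with $\sigma(x) = v$, or $\sigma(\bar x) = v_{n+1}$. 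Claims (2)--(4) are obtained directly by unfolding the footprints of $\lambda$ together with Proposition~\ref{prop:12} applied to the global-write tails $\Lambda_1$, $\Lambda_2$.

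For the \rn{Seq} case, write $P = P_1;P_2$, $\Lambda = \Lambda_1;\Lambda_2$ with $\Lambda_i \in \Gamma(P_i)$, $(\sigma,\tau) = (\sigma_1,\tau_1)\seqop(\sigma_2,\tau_2)$ and $(\sigma_i,\tau_i) \in \footp{P_i}_{\Lambda_i}$. Claims (2) and (3) split on whether the maximal global (resp.\ buffer) write to $x$ in $\Lambda$ lies in $\Lambda_2$ (apply the IH to $P_2$ and read off $\tau = \supd{\tau_1}{\tau_2}$) or in $\Lambda_1$ (apply the IH to $P_1$, and use the absence of any later writes in $\Lambda_2$ together with IH(4) on $P_2$ to propagate the value through $\tau_2$ to $\tau$). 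Claim (4) follows because ``no buffer writes to $\bar x$'' holds for both $P_i$, so IH(4) gives the preservation of the $v$-component on each side and they chain. The delicate claim is (1): the minimal reads of $P$ either lie in $P_1$ (where IH(1) applies to $P_1$ and the claim transfers to $\sigma = \sigma_1 \cup (\sigma_2 \setminus \dom(\tau_1))$ directly), or in $P_2$ but with the extra hypothesis that there are no writes to $x,\bar x$ in $\Lambda_1$ below them; in that subcase I will use IH on $P_2$ to locate the value $v$ in $\sigma_2(\bar x)$, then use IH(4) on $P_1$ (applicable because $P_1$ has no buffer writes to $\bar x$) to move back to $\sigma_1(\bar x)$, and finally show that the $n \geq g$ bound is preserved by using Corollary~\ref{cor:2} and the arithmetic of the characteristic under sequential composition.

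For the \rn{Par} case, $P = P_1 \parop P_2$, $(\sigma,\tau) = (\sigma_1 \cup \sigma_2, \tau_1 \cup \tau_2)$ with $\zeta(\sigma_i)$ and $\zeta(\tau_i)$. By Corollary~\ref{cor:3} each $\charac{x}{P_i} = (0,0)$, which makes the inductive bounds $n \geq g$, $n \geq b$ degenerate into $n \geq 0$ and gives claims (1) and (3) almost for free after locating the relevant minimal read or maximal buffer write inside the unique component of $P_1 \parop P_2$ that contains it. For (2), both $\Lambda_1$ and $\Lambda_2$ retain all global writes of $\Lambda$ by the restriction defined in \rn{Par}, so IH(2) applies to whichever $P_i$ has $x \in \dom(\tau_i)$, and the consistency clauses $\scons{\tau_1}{\tau_2}$ force agreement on the common value. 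Claim (4) is analogous. The main obstacle I anticipate is the Seq-case analysis for claim (1), where keeping the interplay between the lower-closure hypothesis in $\Lambda$ and the restriction to $\Lambda_1$ precise — and correctly invoking IH(4) for $P_1$ — is the only step that is not a direct unfolding.
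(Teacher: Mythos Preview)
Your proposal is correct and takes exactly the approach the paper indicates: the paper states only that the result is shown ``by induction on the derivation of $(\sigma,\tau)\in\footp{P}_\Lambda$'', and your case split over \rn{Act}, \rn{Seq}, \rn{Par} with the four claims handled simultaneously is the intended elaboration. One small simplification: in the \rn{Seq} subcase of claim~(1) where the relevant minimal reads lie in $P_2$, the hypothesis forces $\Lambda_1$ (hence $P_1$) to contain no writes to $x$ or $\bar x$, so by the ``moreover'' parts of the induction hypotheses for (2) and (3) you get $x,\bar x\notin\dom(\tau_1)$ directly and the $\sigma_2$-entries pass through to $\sigma$ without needing IH(4); the bound $n\geq g$ then follows because $\charac{x}{P_1}=(0,0)$ gives $g=g_2$.
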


\noindent As a corollary, we validate the intuition that the final state should be determined by the total order imposed by $\Lambda$ on the writes, where $\beta$ is as in~\eqref{eq:6}:
\begin{cor}%
  \label{cor:l-determines-tau}
  For all $P$, $\Lambda \in \Gamma(P)$, and $(\sigma, \tau) \in \footp{P}_\Lambda$, we have $\tau \restriction_\Loc = \lsState{\Lambda \restriction_\Aw}$\@.
  \qed%
\end{cor}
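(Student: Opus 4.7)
The plan is to derive the corollary directly from Proposition~\ref{prop:5}(2). Both sides of the claimed equality $\tau \restriction_\Loc = \lsState{\Lambda \restriction_\Aw}$ are partial functions on $\Loc$, so I would fix an arbitrary $x \in \Loc$ and show that the two partial functions are defined on the same set and agree where defined.

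First I would record a preliminary observation about $\beta$: by a straightforward induction on the length of $\Lambda \restriction_\Aw$ using the clauses in~\eqref{eq:6}, $\lsState{\Lambda \restriction_\Aw}(x)$ is defined precisely when $\Lambda$ contains at least one global write to $x$, and when defined it equals the value of the $<_\Lambda$-maximal such write. Since $\Lambda \in \Gamma(P)$ is a linearisation, its global writes to $x$ are totally ordered by $<_\Lambda$, so this maximum exists whenever any global write to $x$ appears in $\Lambda$.

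Next I would split on whether $\Lambda$ contains a global write to $x$. If not, the ``moreover'' part of Proposition~\ref{prop:5}(2) gives $x \notin \dom(\tau)$, hence $(\tau \restriction_\Loc)(x)$ is undefined; by the preliminary observation $\lsState{\Lambda \restriction_\Aw}(x)$ is also undefined, so the two sides agree vacuously. If $\Lambda$ does contain a global write to $x$, let $x := v$ be the $<_\Lambda$-maximal such write. Proposition~\ref{prop:5}(2) yields both $x \in \dom(\tau)$ and $\tau(x) = v$, while the preliminary observation gives $\lsState{\Lambda \restriction_\Aw}(x) = v$, so the two sides agree again.

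There is no real obstacle here: the substantive content is already contained in Proposition~\ref{prop:5}(2), and the corollary merely repackages it in terms of the $\beta$ operator. The only routine check is the inductive unfolding of~\eqref{eq:6} to confirm that $\beta$ picks out the $<_\Lambda$-maximal global write to $x$, which is immediate from the definition.
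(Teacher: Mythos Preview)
Your proposal is correct and follows exactly the intended route: the paper presents this as an immediate corollary of Proposition~\ref{prop:5}(2), and you have simply unpacked that implication pointwise, together with the routine unfolding of~\eqref{eq:6}.
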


\subsubsection{Executions}

Let $\zfootp{P}_\Lambda = \{ (\sigma, \tau) \in \footp{P}_\Lambda \mid \zeta(\sigma) \aand \zeta(\tau)\}$ be the subset of footsteps with empty buffers.
The set of \defin{TSO executions} of a finite pomset $P$ is given by the set $\DExec P = \{ (\sigma, \supd{\sigma}{\tau}) \mid \Lambda \in \Lin(P), (\sigma', \tau) \in \zfootp{P}_\Lambda, \sigma' \subseteq \sigma\}$; infinite pomsets are considered in Section~\ref{sec:infinite-executions}.
These executions take all of the states $\sigma$ containing a minimal fragment $\sigma'$ required to execute $P$ to that state updated with the effects $\tau$ of $P$\@.
The set of TSO executions for a program $p$ is then $\DExec p = \bigcup_{P \in \PTSO(p)} \DExec P$\@.
We say that a finite pomset $P$ and a program $p$ are \defin{TSO executable} if $\DExec P$ and $\DExec p$ are non-empty, respectively.

We illustrate TSO pomset executions by validating the IRIW litmus test, \ie, by showing that all writes appear in the same order to all threads.
For example, starting from a state initialized to zero, executing the program \[
  {x:=1} \parop {y:=1} \parop {(w_1:=x; w_2:=y)} \parop {(z_1:=y; z_2:=x)}
\]
under TSO should never give a state consistent with $\state{w_1:1,w_2:0,z_1:1,z_2:0}$\@.
To show this, it is sufficient to show that the following pomset $P$ is not executable: \[
  \xymatrix@C-1em@R-1.5em{\bar x:=1 \ar[d]&\bar y:=1\ar[d]& x = 1 \ar[d]
    & y = 1 \ar[d]\\
    x:=1 & y:=1 & y = 0 & x = 0.}
\]
Consider some $\Lambda \in \Lin(P)$\@.
Without loss of generality, assume $x:=1 <_\Lambda y:=1$\@.
To get an execution, we must apply \rn{Par}, and eventually we will need to compute $\footp{P_4}_{\Lambda_4}$ where $P_4$ is $y = 1 \to x = 0$ and $\Lambda_4$ is such that $x:=1 <_{\Lambda_4} y:= 1$\@.
To be able to execute $y=1$ and still get a footstep with an empty initial buffer, we need $\Lambda_4$ to satisfy $y:=1 < y=1$\@.
But then $\footp{P_4}_{\Lambda_4} = \footp{x:=1, y:= 1}^* \seqop \footp{\{y=1\}}_{\lst{y=1}} \seqop \footp{\{x~=~0\}}_{\lst{x=0}}$, and there are no footsteps in $\footp{x:=1, y:= 1}^* \seqop \footp{\{y=1\}}_{\lst{y=1}}$ that can be combined with those in $\footp{\{x=0\}}_{\lst{x=0}}$ to get states with empty buffers.
This means we cannot combine footsteps from $P_4$ to get footsteps for $P$ using the \rn{Par} rule, and so $\footp{P}_{\Lambda}$ will be empty.

As discussed in the introduction, the Dekker mutual exclusion algorithm fails under TSO\@.
Indeed, the second pomset for Dekker on page~\pageref{dekker-tso-poms} can be executed from an initial state having both $x$ and $y$ set to zero.
To do so, we take a $\Lambda$ such that ${y~=~0} <_\Lambda {y~:=~1}$ and ${x~=~0} <_\Lambda {x~:=~1}$, and apply \rn{Par} followed by \rn{Seq}.

In contrast, the Peterson algorithm successfully enforces mutual exclusion under TSO\@.
Consider the following instance of the Peterson algorithm:
\[ {(x:=1; \ifb{x=2}{l:=1}{\cskip})} \parop {(x:=2;\ifb{x=1}{r:=1}{\cskip})}. \] Starting from the initial state $\state{x:0,l:0,r:0}$, one cannot execute the above under TSO and reach a state where both $l$ and $r$ are 1.
In showing this, we can safely ignore all pomsets where a read from $x$ appears before the global write to $x$, because whenever we have $\bar x:=v \to x=v' \to x:=v$ in a command's TSO pomset, we must have $v = v'$\@.
This implies that if a thread reads $x$ before it does the global write to $x$, it will take the $\cskip$ branch of the conditional.
It is then sufficient to show that the following pomset is not executable: \[
  \xymatrix@C-1em@R-1.5em{{\bar x} :=1 \ar[r] & x:= 1 \ar[r] & x = 2\\
    {\bar x} := 2 \ar[r] & x:= 2 \ar[r] & x=1.}
\]
Consider some $\Lambda \in \Lin(P)$\@.
Without loss of generality, assume $x:=1 <_\Lambda x:=2$\@.
To get an execution, we must apply \rn{Par} and derive a footstep for the bottom row $P_2$ under some $\Lambda_2$ where $x:=1 <_{\Lambda_2} x:=2$\@.
To compute this footstep, we must repeatedly apply \rn{Seq}, and will eventually reach the stage where $\footp{P_2}_{\Lambda_2} = \{(\state{x~:~0},\state{x~:~2})\}\seqop\footp{x=1}_{\lst{x=1}}$\@.
But this footprint must be empty, because $\supd{x:0}{x:2}$ is not consistent with $\state{x:1}$\@.
We thus cannot apply \rn{Par} and we conclude that the pomset is not executable.
It follows that the Peterson algorithm enforces mutual exclusion under TSO\@.

\subsection{Fences}

We can extend the above semantics to deal with fences.
This extension will not be referenced in subsequent sections, and we mention it here merely to emphasize the flexibility of our general development.

A \textit{fence} constrains the reordering of memory actions.
To capture fences, we first introduce a command $\cfence$\@.
Under TSO, fences cause all actions before the fence to be observed before any actions after the fence.
It is sufficient to flush the thread's buffer to ensure this, giving rise to the semantic clause $\PT_L(\cfence) = \{(L;\{\delta\},\emptylist)\}$\@.

\subsection{Infinite executions}%
\label{sec:infinite-executions}

Though we cannot capture the input-output behaviour of infinite pomsets (such executions have no ``final state''), we can describe their executions in terms of executions of finite pomsets.
We first characterize the ``phased executions'' describing infinite executions.
Then we describe when a phased execution is an execution of a given infinite pomset.

\newcommand{\PE}{\calP_\mathcal{E}}
The set $\PE(\sigma) \subseteq \poms(\Sigma \times \PTSO \times \Sigma)$ of \defin{execution-from-$\sigma$ pomsets} is coinductively defined by:
\begin{enumerate}
\item if $P$ is finite and $(\sigma, \tau) \in \DExec P$, then $\{ (\sigma, P, \tau)\} \in \PE(\sigma)$; and
\item if $P$ is finite, $(\sigma, \tau) \in \DExec P$, and $E \in \PE(\supd{\sigma}{\tau})$, then $\{(\sigma,P,\tau)\} ; E \in \PE(\sigma)$\@.
\end{enumerate}
All pomsets in $\PE(\sigma)$ are linear, satisfy the finite-height property, and are series-parallel.
Given an $E \in \PE(\sigma)$, its underlying TSO pomset $\pi_2(E)$ is obtained by sequentially composing its pomsets in the obvious manner.
Explicitly, $\pi_2(E) = (\bigcup_{e \in E} \{e\} \times (\pi_2\circ \Phi_E)(e), {<}, \Phi)$, where $\Phi(e, p) = \Phi_{\Phi_E(e)}(p)$, and $(e_0,p_0) < (e_1,p_1)$ if and only if $e_0 <_E e_1$, or $e_0 = e_1$ and $p_0 <_{\Phi(e_0)} p_1$\@.

An execution pomset $E$ is a \defin{phased execution} of $P \in \PTSO$ if $P = \pi_2(E)$\@.
Let $\PE(P) = \bigcup_{\sigma \in \Sigma} \{ E \in \PE(\sigma) \mid \pi_2(E) = P \}$ be the set of execution pomsets of $P$\@.

This definition captures the ``termination'' axiom~\cite[p.~283]{SPARC8} that guarantees fair executions.
A pomset $P$ satisfies this axiom if for all write actions $w$ to $x$ and all infinite sequences $r_1 <_P r_2 <_P r_3 <_P \cdots$ of read actions from $x$, there exists a $j$ such that $w <_P r_j$\@.
This axiom prohibits, \eg, the non-terminating execution of the program $y := 1 \parop \whileb{y=0}{\cskip}$\@.
For every $E \in \PE(\sigma)$, $\pi_2(E)$ satisfies the termination axiom.
Indeed, consider a write action $w$ and an infinite sequence ${\{r_i\}}_{i \in \N}$ in $\pi_2(E)$\@.
There exists an $e \in E$ such that $w \in e$\@.
Because each of the pomsets in $E$ is finite and $E$ is linear, there exists an $e'$ and an $i$ such that $e <_E e'$ and $r_i \in e'$\@.
Then $w <_{\pi_2(E)} r_i$ by definition.

\begin{prop}
  We can characterize phased executions of pomsets $P$ as follows:
  \begin{enumerate}
  \item if $P$ is finite, then every $E$ in $\PE(P)$ is finite;
  \item if some $E \in \PE(P)$ is finite, then $P$ is finite;
  \item if $P$ is finite, $(\sigma, \tau) \in \DExec P$, then $(\sigma, P, \tau) \in \PE(P)$;
  \item if $\{(\sigma_1,P_1,\tau_2)\};\cdots;\{(\sigma_n,P_n,\tau_n)\} \in \PE(P)$ is finite, then $P = P_1;\cdots;P_n$, $\sigma_{i+1} = \supd{\sigma_i}{\tau_i}$ for $1 \leq i < n$, and $(\sigma_1, \supdm{\sigma_1 \mid \tau_1 \mid \cdots \mid \tau_n}) \in \DExec P$\@. 
  \end{enumerate}
\end{prop}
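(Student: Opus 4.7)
The plan is to prove the four claims in the order $(3), (4), (2), (1)$, since $(4)$ provides the structural decomposition needed to reason about finite execution pomsets and $(3)$ is the atomic base case.

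For $(3)$, I would just apply the first clause of the coinductive definition of $\PE(\sigma)$: since $P$ is finite and $(\sigma, \tau) \in \DExec P$, we have $\{(\sigma, P, \tau)\} \in \PE(\sigma)$, and unfolding the definition of $\pi_2$ gives $\pi_2(\{(\sigma, P, \tau)\}) = P$, so this element lies in $\PE(P)$.

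For $(4)$, I would proceed by induction on $n$. The base case $n = 1$ comes from the first clause of the definition. For the inductive step, write the given pomset as $\{(\sigma_1, P_1, \tau_1)\}; E'$ with $E' = \{(\sigma_2, P_2, \tau_2)\};\cdots;\{(\sigma_n, P_n, \tau_n)\}$; this decomposition can only come from the second clause, which forces $(\sigma_1, \tau_1) \in \DExec{P_1}$ and $E' \in \PE(\supd{\sigma_1}{\tau_1})$, and hence $\sigma_2 = \supd{\sigma_1}{\tau_1}$. Applying the inductive hypothesis to $E'$ yields $\pi_2(E') = P_2;\cdots;P_n$, the update equalities $\sigma_{i+1} = \supd{\sigma_i}{\tau_i}$ for $2 \leq i < n$, and $(\sigma_2, \supdm{\sigma_2 \mid \tau_2 \mid \cdots \mid \tau_n}) \in \DExec{P_2;\cdots;P_n}$. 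To combine these into a witness for $(\sigma_1, \supdm{\sigma_1 \mid \tau_1 \mid \cdots \mid \tau_n}) \in \DExec{P}$, I would concatenate the underlying linearisations $\Lambda_i \in \Lin(P_i)$ and invoke Proposition~\ref{prop:2}, which tells us precisely that $\footp{P_1;\cdots;P_n}_{\Lambda_1;\cdots;\Lambda_n}$ decomposes as $\footp{P_1}_{\Lambda_1} \seqop \cdots \seqop \footp{P_n}_{\Lambda_n}$; closure under $\zeta$ at the endpoints of the chain is exactly what lets the intermediate footsteps be sequenced via $\seqop$. The identity $P = P_1;\cdots;P_n$ falls out of the definition of $\pi_2$ on a linear execution pomset.

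For $(2)$, a finite $E$ decomposes via $(4)$ into finitely many components, each carrying a finite pomset by the definition of $\PE$, so $\pi_2(E) = P_1;\cdots;P_n$ is a finite composition of finite pomsets, hence finite. For $(1)$, I would argue contrapositively: an infinite $E \in \PE(P)$ has infinitely many components, each of which (up to the standing identification $\emptypset \equiv \{\delta\}$) contributes at least one action to $\pi_2(E)$, so $P$ is infinite.

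The main obstacle will be the compositional step in $(4)$: verifying that the concatenation $\Lambda_1;\cdots;\Lambda_n$ really is an element of $\Lin(P)$ and that the chained footsteps assemble into a single element of $\zfootp{P}_{\Lambda}$ with the advertised initial fragment $\sigma_1' \subseteq \sigma_1$ and cumulative update $\supdm{\sigma_1 \mid \tau_1 \mid \cdots \mid \tau_n}$. This amounts to threading the consistency condition $\scons{\supd{\sigma_i}{\tau_i^*}}{\sigma_{i+1}'}$ through the chain, which is exactly what the hypothesis $\sigma_{i+1} = \supd{\sigma_i}{\tau_i}$ (established inductively) is designed to provide.
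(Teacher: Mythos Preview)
The paper states this proposition without proof, so there is no argument to compare against directly. Your plan is sound: (3) is immediate from the first clause of the coinductive definition, (4) is the substantive part and your induction on $n$ using Proposition~\ref{prop:2} to splice the footsteps is the natural route, and (1)--(2) then follow from the structural information (4) provides.

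Two small points. First, your parenthetical ``up to the standing identification $\emptypset \equiv \{\delta\}$'' is not quite right: the paper explicitly distinguishes these (it says $\{\delta\} \neq \emptypset$). What actually saves (1) is that no component $P_i$ can be $\emptypset$, since $\footp{\emptypset}_\emptylist = \emptyset$ (none of \rn{Act}, \rn{Seq}, \rn{Par} bootstraps a footstep for the empty pomset), so $\DExec{\emptypset} = \emptyset$; hence every component contributes at least one action, and the $\delta$-identification only permits \emph{finite} deletions, so an infinite $E$ still yields an infinite $\pi_2(E)$. Second, in (4) be careful that the $\tau_i$ appearing in $(\sigma_i,\tau_i) \in \DExec{P_i}$ is already the updated state $\supd{\sigma_i}{\tau_i'}$ for some raw footstep output $\tau_i'$, so when you thread the consistency conditions $\scons{\supd{\sigma_i'}{\tau_i'}}{\sigma_{i+1}'}$ you need to work with the underlying footstep data, not the execution-level $\tau_i$; this is exactly the bookkeeping you flag as the main obstacle, and it goes through because $\sigma_{i+1}' \subseteq \sigma_{i+1} = \supd{\sigma_i}{\tau_i'}$ and $\sigma_i' \subseteq \sigma_i$.
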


\section{Soundness and Completeness}%
\label{sec:sands}

We show that our denotational account of TSO in Section~\ref{sec:denotational} is sound and complete relative to the axiomatic account of Section~\ref{sec:axiom}.
Soundness implies that we capture only behaviours permitted by the axiomatic account; completeness implies that we capture all behaviours permitted by the axiomatic account.
Because all TSO-consistent orders are contained in TSO-consistent total orders by Corollary~\ref{cor:tsotot} and can be obtained by weakening these, it is sufficient to show that we capture all TSO-consistent total orders.
We identify total orders and lists.

\subsection{Soundness}

We call a function $f : \poms(\Apo) \to \wp(\lists{\Apo})$ \defin{sound} when for every program $p$ and finite pomset $P \in \Ppo(p)$, if $L \in f(P)$, then $L$ is TSO-consistent with $P$\@.

We will construct such an $f$ and show that it is sound in this subsection, and we will show that it is complete in the next subsection.
We begin by giving an important characterization of pomsets for programs.

\begin{lem}%
  \label{lemma:1}
  For every program $p$ and pomset $P \in \PTSO(p)$, there exists an order isomorphism $\omega : P\restriction_\Ab \to P\restriction_\Aw$ such that if $\omega(\bar x := v) = (y := w)$, then $y = x$ and $w = v$, and such that for all $b \in P\restriction_\Ab$, we have $b <_P \omega(b)$\@.
  \qed%
\end{lem}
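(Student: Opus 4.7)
The plan is to prove a strengthened statement by structural induction on commands (and expressions). Specifically, I would show that for any command $c$ and buffers $L, L' \in \Ls$ with $(P, L') \in \PT_L(c)$ (and similarly for $\PB_L(c)$), there is a bijection $\omega$ from $L \mathbin{\uplus} (P\restriction_\Ab)$ to $(P\restriction_\Aw) \mathbin{\uplus} L'$ satisfying three properties: labels match (so $\omega(\bar x := v)$ has the form $x := v$); the restriction of $\omega$ to $L$ respects list order, with some initial segment of $L$ mapped into $P\restriction_\Aw$ preserving order and the remaining suffix mapped onto $L'$ preserving order; and every buffer write $\bar x := v \in P\restriction_\Ab$ sent into $P\restriction_\Aw$ is strictly below its image in $<_P$. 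Taking $L = L' = \emptylist$ and $c = p$ specializes the bijection to the $\omega : P\restriction_\Ab \to P\restriction_\Aw$ required by the lemma.

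The base cases ($\cskip$, constants, variable reads) are immediate, since $P$ contains no write actions and $L' = L$. The write command $\PB_L(x := e)$ appends $\bar x := v$ to the end of $P$ and $x := v$ to the end of the buffer, so the inductive bijection for $e$ (which has no buffer writes) extends by pairing these two new matching entries; the order condition $\bar x := v <_P x := v$ is vacuous because the global write sits in $L'$, not in $P\restriction_\Aw$. Sequential composition $\PB_L(c_1; c_2) = \PT_L(c_1) \seqit \PT(c_2)$ concatenates the two inductive bijections through the intermediate buffer $L_1$: entries of $L$ either get flushed in $P_1$ or are passed through to $L_1$ and handled by the bijection for $c_2$, and likewise for buffer writes in $P_1$. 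Conditionals and while-loops are built from sequential composition and disjoint unions and so inherit the invariant directly.

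The two subtle cases are parallel composition and the $\splt$-based flushes in $\PT_L(c) = \splt(L) \seqit \PB(c) \seqit \splt$. For parallel composition, $\PB_L(c_1 \parop c_2)$ prepends $L$ as a block of flushed global writes and then runs each $P_i$ from an empty buffer to an empty buffer; by induction each $P_i$ already carries a bijection between $P_i\restriction_\Ab$ and $P_i\restriction_\Aw$ satisfying the ordering property, and these combine with the obvious order-preserving inclusion of $L$ into the initial flushed segment to yield $\omega$ for $P$. For $\splt$, dequeuing a prefix $L_1$ of the current buffer and inserting it into the pomset as global writes simply reassigns the targets of those $L_1$ entries from the input-side $L$ component to new positions inside $P\restriction_\Aw$; the prefix shape of $\splt$ ensures the order among them is preserved, so the invariant passes through both $\splt(L) \seqit \PB(c)$ and $\cdots \seqit \splt$.

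The main obstacle is pinning down the invariant precisely enough to compose through $\splt$ and $\seqit$: because arbitrary prefixes may be flushed at arbitrary times and a single buffer write in $P$ can migrate through several intermediate buffers before becoming a global write, one must carefully track which list entry in the current buffer descends from which earlier buffer write or from the original $L$. Once this bookkeeping is fixed, each semantic clause is verified by a direct unfolding, and the FIFO discipline of the buffer together with the fact that $\splt$ only removes prefixes guarantees the order-isomorphism property on the $L$ component, while the construction of the $\bar x := v$ action always placed before its enqueue guarantees $b <_P \omega(b)$ for any buffer write eventually flushed into $P\restriction_\Aw$.
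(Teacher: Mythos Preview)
The paper states this lemma without proof (the \qed\ follows the statement directly), so there is no argument to compare against; your structural induction with a buffer-tracking invariant is the natural way to fill in the details, and the overall decomposition you give is the right one.

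There is, however, a genuine gap in the invariant you state. Your three clauses guarantee that labels match, that the $L$-component is dispatched in order, and that each buffer write $b$ landing in $P\restriction_\Aw$ satisfies $b <_P \omega(b)$. None of them forces $\omega$ to be order-\emph{preserving} on $P\restriction_\Ab$, which is precisely what the lemma requires. Concretely, for $P = (\bar x{:=}1 \to \bar x{:=}1 \to x{:=}1 \to x{:=}1) \in \PTSO(x{:=}1;x{:=}1)$ the ``swapped'' bijection (first $\bar x{:=}1$ to second $x{:=}1$ and conversely) satisfies all three of your properties yet is not an order isomorphism. The same defect blocks the sequential-composition step: to glue $\omega_1$ and $\omega_2$ through the intermediate buffer $L_1$ you need $\omega_1$ to send buffer writes of $P_1$ into $P_1\restriction_\Aw \uplus L_1$ \emph{in order}, and that is not among your hypotheses. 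Your closing paragraph even singles out ``the order-isomorphism property on the $L$ component'' as what the FIFO discipline delivers, which suggests the omission on the $P\restriction_\Ab$ side is not merely elided but overlooked.

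The repair is to require that $\omega$ be an order isomorphism when $L \uplus P\restriction_\Ab$ carries the order placing $L$ (in list order) entirely below $P\restriction_\Ab$ (with the order inherited from $<_P$), and $P\restriction_\Aw \uplus L'$ carries the order placing $P\restriction_\Aw$ (with $<_P$) entirely below $L'$ (in list order). This strengthened invariant composes cleanly through $\seqit$ and through $\splt$ (which merely transfers a prefix from one side to the other), and at parallel composition it splits into two independent instances with empty buffers, exactly as you describe. With that change the specialisation $L = L' = \emptylist$ genuinely yields the order isomorphism the lemma asks for; your clause $b <_P \omega(b)$ must still be carried separately, since it does not follow from the isomorphism alone.
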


For programs, this means that global writes appear after the corresponding write to the buffer, that global writes occur in the same order as the writes to the buffer, and that all writes to the buffer give rise to global writes.
We call a series-parallel TSO pomset for which such an $\omega$ exists for each of its SP components \textit{well-balanced}; these ``patch together'' to form an $\omega$ for the entire pomset.
Given a program $p$, all TSO pomsets in $\PTSO(p)$ are well-balanced, and finite $P \in \PTSO(p)$ have a unique such $\omega$\@.
We assume in the rest of this section that our TSO pomsets are well-balanced.

Consider the function $U : \poms(\Atso) \to \poms(\Apo)$ that takes each TSO pomset to its underlying program order.
It does so by first deleting all global write actions, and then relabelling all buffer write actions $\bar x:=v$ by corresponding global write actions $x:=v$\@.
We identify all reads in $P$ with the corresponding reads in $U(P)$ and $(x:=v) = \omega(\bar x := v)$ with the write $x:=v$ below $\bar x:=v$\@.
We can imagine $U$ and the identifications as being given in the following diagram, where dashed arrows indicate identifications, solid arrows indicate the pomset orders, and the $\lambda_j$ are arbitrary read actions: \[
  \xymatrix@C-1em{
    \cdots \ar[r] &
    \lambda_i \ar[r] \ar@{-->}[d] & \bar x := v \ar[r] & \lambda_{i+2} \ar[r]\ar@{-->}[d] &
    \cdots \ar[r] &
    \lambda_k \ar[r]\ar@{-->}[d] & x := v \ar[r]\ar@{-->} `d[l]
    `l[lllld] [llldl]  & \lambda_{k+2} \ar[r]\ar@{-->}[d] &
    \cdots & &
    P \ar[d]^U\\ 
    \cdots \ar[r] &
    \lambda_i \ar[r] & x := v \ar[r] & \lambda_{i+2} \ar[r] &
    \cdots \ar[r] &
    \lambda_k \ar[rr] & & \lambda_{k+2} \ar[r] &
    \cdots & &
    U(P)
  }
\]

By observing that the PO pomset clauses are essentially special cases of the TSO pomset clauses, we have that for all programs $p$, $U(\PTSO(p)) \subseteq \Ppo(p)$\@.
This inclusion is actually an equality, because given any program order $P$ for $p$, we can construct a TSO pomset $P'$ such that $U(P') = P$ by immediately flushing the buffer with $\splt$ after every write, \ie, by replacing all occurrences of $x:=v$ in $P$ with $\bar x := v \to x:=v$ to get $P'$\@.

Let the set $\T(P)$ of TSO-consistent total orders of $P \in \poms(\Atso)$ be given by \[
  \T(P) = \bigcup_{P' \in U^{-1}(P)} \{ \Lambda\restriction_\Apo \mid \Lambda \in \Lin(P') \aand \zfootp{P'}_\Lambda \neq \emptyset \}.
\] Informally, $\T(P)$ captures the linearisations of TSO pomsets in $U^{-1}(P)$ that give rise to TSO executions of pomsets.

We will eventually show that $\T$ is sound.
First, we prove a few technical lemmas that will be useful in showing axiom \rn{V} is satisfied.
The first implies that \rn{Va} is satisfied.

\begin{lem}%
  \label{lemma:2}
  Let $P$ be a TSO pomset and $\Lambda \in \Gamma(P)$ such that $\zfootp{P}_\Lambda \neq \emptyset$\@.
  Let ${(x=v)}_r \in P$ be a read action such that for all buffered write actions $b <_P {(x=v)}_r$, its corresponding global write action $\omega(b)$ satisfies $\omega(b) <_P {(x=v)}_r$\@.
  If there exists a write ${(x:=v')}_w$ maximal under $<_\Lambda$ amongst all writes to $x$ in $\lc{{(x=v)}_r}\Lambda$ and all global writes to $x$ in $\lc{{(x=v)}_r}P$ are in $\lc{{(x:=v')}_w}\Lambda$, then $v' = v$\@.
\end{lem}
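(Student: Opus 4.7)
The plan is to fix a footstep witnessing $\zfootp{P}_\Lambda\neq\emptyset$, isolate the moment at which the read $(x=v)_r$ is executed, and show that at that moment the buffer for $x$ is empty while the global value of $x$ equals $v'$; combined with the footprint semantics of the read, this forces $v=v'$.

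By Proposition~\ref{prop:6}, $P$ is series-parallel. I would descend through the SP parse tree of $P$ along the unique path to the read by iterated applications of Propositions~\ref{prop:2} and~\ref{prop:8}, factoring $\footp{P}_\Lambda$ into an expression in which a state $(\sigma_r,\tau_r)$ appears immediately before a footstep from $\footp{(x=v)_r}$ is composed in via $\seqop$. Writing $\Lambda_-$ for the $<_\Lambda$-prefix of $\Lambda$ strictly before the read, $\tau_r$ records the cumulative effect of $\Lambda_-$ as organised by the SP structure: buffered writes contribute $+1$ to the buffer count for their location and matching global writes contribute $-1$.

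The key claim is that the buffer for $x$ is empty in $\supd{\sigma_r}{\tau_r}$. Any buffered write $b\in P$ with $b<_P(x=v)_r$ has, by hypothesis, $\omega(b)<_P(x=v)_r$; since SP components cannot cross Par boundaries, $b$ and $\omega(b)$ lie in the same innermost Par-branch containing the read, and each such pair contributes $0$ to the net buffer count for $x$ by the time the read is reached. Any buffered write $b\in P$ with $b\nc_P(x=v)_r$ lies in a sibling parallel branch at some level of the parse tree which, by Corollary~\ref{cor:3}, has characteristic $(0,0)$ for $x$; the $\zeta$-requirements in~\rn{Par} then prevent any leakage of buffer entries for $x$ into the read's branch. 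The remaining contributions come from the environment-step factors $\footp{\cdot}^*$, which by Proposition~\ref{prop:12} never touch buffer locations. Together these observations show that the buffer for $x$ is empty in $\supd{\sigma_r}{\tau_r}$.

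It remains to pin down the global value of $x$ at that moment. By hypothesis, $(x:=v')_w$ is $<_\Lambda$-maximal among all writes to $x$ in $\Lambda_-$, and no global write to $x$ in $\lc{(x=v)_r}P$ lies $<_\Lambda$-above $(x:=v')_w$; hence $(x:=v')_w$ is the $<_\Lambda$-latest global write to $x$ in the entire $\Lambda_-$. Applying Proposition~\ref{prop:12} to the environment-step portions and Proposition~\ref{prop:5}(2) to the pomset portion of the decomposed execution up to the read yields that $\supd{\sigma_r}{\tau_r}$ maps $x$ to $v'$. Inspecting $\footp{(x=v)_r}$, the only family of footsteps compatible with an empty buffer for $x$ at this stage is $([x:v,\bar x:v_0],\emptystate)$; consistency under $\seqop$ then forces $v=v'$.

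The main obstacle is the third paragraph's bookkeeping: cleanly separating the buffer contributions from the read's own branch (controlled by the $\omega$-hypothesis) from those of sibling parallel branches (rendered inert by Corollary~\ref{cor:3} and the $\zeta$-requirements of~\rn{Par}), and ensuring that the environment steps $\footp{\cdot}^*$ play no role in the buffer count. An organised formal write-up would proceed by induction on the SP parse tree of $P$, with Propositions~\ref{prop:2} and~\ref{prop:8} managing the sequential and parallel cases respectively.
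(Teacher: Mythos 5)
Your proposal is correct and follows essentially the same route as the paper's proof: the paper likewise reduces to the linear/SP-component case (via Propositions~\ref{prop:2} and~\ref{prop:8}), splits $\Lambda$ at the read, uses the $\omega$-hypothesis with the characteristic machinery (Lemma~\ref{lemma:10}, Proposition~\ref{prop:4}, Corollary~\ref{cor:2}) to show the buffer for $x$ is empty when the read executes, obtains the global value $v'$ from Proposition~\ref{prop:12}, and concludes $v=v'$ from the consistency requirement of $\seqop$ with the read's footprint.
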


\begin{proof}
By the remarks on page~\pageref{pars:sp-explan}, it is sufficient to consider only linear pomsets $P$\@.
Let $\Lambda_L = \lc{r}\Lambda \setminus r$, $\Lambda_R$ be such that $\Lambda = \Lambda_L;\Lambda_R$, $L = P \cap \Lambda_L$, and $R = P \cap \Lambda_R$\@.
Let $(g, b) = \charac{x}{P}$, then by Lemma~\ref{lemma:10}, we have $g = b = 0$\@.
Consider some arbitrary footstep $(\sigma, \tau) \in \zfootp{P}_\Lambda$, and let $(\sigma_L, \tau_L) \in \footp{L}_{\Lambda_L}$ and $(\sigma_R, \tau_R) \in \footp{R}_{\Lambda_R}$ be the footsteps combined by \rn{Seq} to form $(\sigma, \tau)$\@.
Then because $\sigma_L \subseteq \sigma$ and $\zeta(\sigma)$, we have $\zeta(\sigma_L)$\@.
Because $L$ has the same number of buffer writes to $\bar x$ as global writes to $x$, we have $\charac{x}{L} = (0, 0)$ by Lemma~\ref{lemma:10} and Proposition~\ref{prop:4}.
So by Corollary~\ref{cor:2}, we get $\tau_L[\bar x] = \sigma[\bar x] = 0$\@.
By Proposition~\ref{prop:12}, we then get $\tau_L(x) = v'$, and also that $\sigma_R(x) = v$\@.
Because $\scons{\supd{\sigma_L}{\tau_L}}{\sigma_R}$ by \rn{Seq}, we conclude $v' = v$\@.
\end{proof}

\noindent The following lemma implies that \rn{Vb} is satisfied.

\begin{lem}%
  \label{lemma:4}
  Let $P$ be a TSO pomset and $\Lambda \in \Gamma(P)$ such that $\zfootp{P}_\Lambda \neq \emptyset$\@.
  Let ${(x=v)}_r \in P$ be a read action such that there exists a buffered write action $b' <_P {(x=v)}_r$ writing to $x$\@.
  Let ${(\bar x:=v')}_b$ be the maximal such $b'$ with regards to $<_P$\@.
  If ${(x=v)}_r <_P \omega({(\bar x:=v')}_b)$, then $v = v'$\@.
\end{lem}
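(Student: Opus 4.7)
The plan is to mirror the strategy used in Lemma~\ref{lemma:2}: reduce to the linear case via series-parallel decomposition, split the pomset at the read $r = {(x=v)}_r$, and use the characteristic apparatus together with Proposition~\ref{prop:5}(3) to pin down the value of $\bar x$ in the intermediate state as $v'$, so that consistency with the read's footstep forces $v = v'$.

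By the remarks on series-parallel decomposition on page~\pageref{pars:sp-explan}, it suffices to treat the case where $P$ is linear. Write $P = L;\{r\};R$, where $L$ gathers everything strictly below $r$, and correspondingly $\Lambda = \Lambda_L;\{r\};\Lambda_R$ with $\Lambda_L \in \Gamma(L)$ and $\Lambda_R \in \Gamma(R)$. By hypothesis, the maximal buffer write $b = {(\bar x := v')}_b$ lies in $L$, while its counterpart $\omega(b)$ lies in $R$. Pick any footstep $(\sigma,\tau) \in \zfootp{P}_\Lambda$ and use \rn{Seq} to decompose it into $(\sigma_L, \tau_L) \in \footp{L}_{\Lambda_L}$, $(\sigma_r, \tau_r) \in \footp{r}$, and $(\sigma_R, \tau_R) \in \footp{R}_{\Lambda_R}$.

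Next I compute $\charac{x}{L}$. Let $b_1 <_L \cdots <_L b_k$ enumerate the buffer writes to $\bar x$ in $L$ (so $b = b_k$ by maximality), and $g_1 <_L \cdots <_L g_m$ the global writes to $x$ in $L$. Since $\omega$ from Lemma~\ref{lemma:1} is an order isomorphism from $\bar x$-writes to $x$-writes satisfying $b_i <_P \omega(b_i)$, the $g_j$ are precisely $\omega(b_1),\ldots,\omega(b_m)$, a prefix of $\omega(b_1),\ldots,\omega(b_k)$; since $\omega(b) = \omega(b_k) \in R$ we have $m \leq k-1$. A short induction on $L$ using the sequential clause for $\charop$ then gives $\charac{x}{L} = (0,\,k-m)$, with $k-m \geq 1$. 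Applying Proposition~\ref{prop:5}(3) to $L$ and $\Lambda_L$, noting that $b_k = b$ is the maximal buffer write to $\bar x$ in $\Lambda_L$, yields $\tau_L(\bar x) = v'_N$ for some $N \geq k-m \geq 1$.

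Finally, the \rn{Seq} compatibility $\scons{\supd{\sigma_L}{\tau_L}}{\sigma_r}$ forces $\sigma_r$ to agree with $\tau_L$ at $\bar x$; since $(\supd{\sigma_L}{\tau_L})(\bar x) = v'_N$ with $N \geq 1$, the only shape of $\sigma_r \in \footp{x=v}$ compatible with this is $\state{\bar x : v_{M+1}}$ (the alternative $\state{x : v, \bar x : v_0}$ would require $N = 0$), and the definition of $\approx$ then forces $v = v'$. The main obstacle is the characteristic bookkeeping in the third paragraph---verifying that the well-balancedness of $\omega$ keeps the running buffer component from ever dropping below zero; once this is in hand, the rest is a direct adaptation of the pattern established in the proof of Lemma~\ref{lemma:2}.
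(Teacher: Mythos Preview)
Your argument is correct and follows essentially the same route as the paper's proof: reduce to the linear case, split at the read, show the buffer count at $\bar x$ is positive just before $r$, identify the buffered value as $v'$, and use \rn{Seq}-consistency with the read's footstep to conclude $v=v'$. The only cosmetic differences are that the paper uses Proposition~\ref{prop:3} (the differential) directly rather than computing $\charac{x}{L}$, and it folds $r$ into the right half $R$ instead of isolating it; your explicit invocation of Proposition~\ref{prop:5}(3) to pin down the value $v'$ in $\tau_L(\bar x)$ is a step the paper leaves implicit.
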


\begin{proof}
It is again sufficient to consider only linear $P$\@.
Let $\Lambda_L$, $\Lambda_R$, $L$, $R$, $(\sigma_L, \tau_L)$, and $(\sigma_R, \tau_R)$ be as in the proof of Lemma~\ref{lemma:2}.
By Proposition~\ref{prop:3}, we have $\tau_L[\bar x] = \sigma_L[\bar x] + b_x - g_x$, where $b_x$ and $g_x$ are the number of buffer writes and global writes to $x$ in $L$, respectively.
But $\sigma_L[\bar x] = \sigma[\bar x] = 0$, and well-balancedness and the last hypothesis imply $b_x > g_x$, so $\tau_L[\bar x] > 0$ and $\tau_L(\bar x) = v'_m$ for some $v$ and $m > 0$\@.
But $\sigma_R(\bar x) = v_n$ for some $v$ and $n$ as well.
Because $\scons{\supd{\sigma_L}{\tau_L}}{\sigma_R}$ by \rn{Seq}, we conclude $v' = v$\@.
\end{proof}

\begin{thm}
  The function $\T$ is sound.
\end{thm}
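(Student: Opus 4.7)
The plan is to unpack $L \in \T(P)$ as $L = \Lambda \restriction_\Apo$ for some $P' \in U^{-1}(P)$ and $\Lambda \in \Lin(P')$ with a witnessing footstep $(\sigma, \tau) \in \zfootp{P'}_\Lambda$, and then verify each of the six TSO axioms from initial state $\sigma \restriction_\Loc$. Axiom~\rn{O} is immediate: $\Lambda$ is total, so $L$ totally orders the $\Apo$-actions of $P'$, including the writes of $P$. Axioms~\rn{L}, \rn{S}, \rn{F}, and~\rn{J} reduce to the single inclusion $<_P \subseteq <_L$. Since $P = U(P')$, reads are preserved by $U$, and each write $x := v$ in $P$ is identified with both $\bar x := v$ (via the $U$-relabelling) and $\omega(\bar x := v)$ in $P'$. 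Well-balancedness (Lemma~\ref{lemma:1}) gives $\bar x := v <_{P'} \omega(\bar x := v)$, so any $<_P$-relation lifts to a $<_{P'}$-relation on appropriate $\Apo$-representatives, which persists in $L$ since $\Lambda$ linearises $<_{P'}$.

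Axiom~\rn{V} requires case analysis on each read $r = (x=v)_r \in P$. Let $b^*$ be the maximal buffer write to $\bar x$ with $b^* <_{P'} r$, if such exists. If $b^* = \bar x := v'$ exists and $r <_{P'} \omega(b^*)$, Lemma~\ref{lemma:4} yields $v = v'$: this gives case~\rn{Vb} with $w = \omega(b^*)$ identified with a write in $P$ that is maximal in $\lc{r}P$ (by order-isomorphism of $\omega$) and satisfies $r <_L w$ (by linearisation). Otherwise the flush of every buffered write to $\bar x$ below $r$ in $P'$ is also below $r$ in $P'$. If additionally no global write to $x$ precedes $r$ in $\Lambda$ (and no write to $x$ lies in $\lc{r}P$), case~\rn{Vc} applies: Proposition~\ref{prop:5}(1), applied at $r$ (or at a minimal read below $r$ in $P'$), yields $\sigma(\bar x) = v_n$, and $\zeta(\sigma)$ forces $n = 0$, so $\sigma(x) = v$. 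In the remaining subcase, there is a maximal global write $w = x := v'$ in $\lc{r}\Lambda$, every write to $x$ in $\lc{r}P$ lies in $\lc{w}L$ (by $\omega$-iso and the fact that their flushes occur below $r$ in $P'$, hence below $w$ in $\Lambda$), and Lemma~\ref{lemma:2} yields $v = v'$, establishing case~\rn{Va}.

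The main obstacle is axiom~\rn{V}. Specifically, Lemma~\ref{lemma:2}'s hypothesis constrains \emph{every} buffered $b <_{P'} r$, not only those to $\bar x$. Justifying its application in case~\rn{Va} requires either an SP-decomposition argument via Propositions~\ref{prop:2} and~\ref{prop:8} (to isolate a linear component within which the per-location flushing condition propagates and unrelated pending writes can be discarded) or a per-location refinement of the lemma. A secondary subtlety is confirming that the chosen maximal global write in case~\rn{Va} dominates every write to $x$ in $\lc{r}P$ under $<_L$, which again relies on $\omega$'s order-isomorphism property and on well-balancedness.
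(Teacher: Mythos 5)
Your treatment of axiom \rn{V} follows essentially the paper's route (Lemmas~\ref{lemma:2} and~\ref{lemma:4}, Proposition~\ref{prop:5} for \rn{Vc}), and the subtlety you flag about Lemma~\ref{lemma:2}'s hypothesis is fair but minor: the lemma's proof only ever uses $\charac{x}{L}=(0,0)$, i.e.\ the per-location condition, and the paper itself applies it after checking only the buffer writes to $x$. The genuine gap is elsewhere: your reduction of \rn{L}, \rn{S}, \rn{F}, \rn{J} to the inclusion ${<_P}\subseteq{<_L}$ rests on a false premise. That inclusion fails in general under TSO --- store--load relaxation is exactly the point. If $w=(x:=v) <_P r$ within one thread, the $\Apo$-representative of $w$ in $P'$ is $\omega(\bar x:=v)$, and nothing forces $\omega(\bar x:=v) <_{P'} r$; well-balancedness gives $\bar x:=v <_{P'} \omega(\bar x:=v)$, which lifts nothing here. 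Concretely, the second Dekker TSO pomset is executable from a zeroed state, and its $\Lambda$ has $y=0 <_L x:=1$ while $x:=1 <_{U(P')} y=0$. So "any $<_P$-relation lifts to $L$" is wrong.

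Axioms \rn{L} and \rn{S} survive anyway, because they only concern read-before-anything and write-before-write pairs: reads are untouched by $U$ and sit below the buffer write hence below $\omega$ of it, and $\omega$ is an order isomorphism on writes. But \rn{F} and \rn{J} do not follow from your argument, and this is where your proof breaks. When $\alpha_1$ is a write and $\alpha_2,\alpha_3$ lie in distinct branches of a fork (possibly reads), you must show $\alpha_1 <_L \alpha_2$ and $\alpha_1 <_L \alpha_3$, i.e.\ that the \emph{global} write $\omega(\bar\alpha_1)$ precedes both branches --- a write-before-read ordering that your lifting cannot deliver. The paper obtains it not from well-balancedness but from the structure of the denotational clauses for parallel composition: $\PB_L(c_1\parop c_2)=\{(L;(P_1\parop P_2),\emptylist)\mid\dots\}$ flushes the entire buffer before every fork (and requires empty buffers at every join), so any buffered write $b$ with $b <_{P'} \alpha_2$, $b <_{P'}\alpha_3$, $\alpha_2 \nc_{P'} \alpha_3$ in fact has $\omega(b) <_{P'} \alpha_2$ and $\omega(b) <_{P'}\alpha_3$; the dual property handles \rn{J}. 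Your proposal never invokes this flush-at-fork/join property, so \rn{F} and \rn{J} are unproved as written; to repair it you must argue from the shape of the TSO pomsets of programs at fork and join points, not from a global order inclusion.
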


\begin{proof}
Let $p$ be an arbitrary program, and let $P' \in \Ppo(p)$, $P \in U^{-1}(P')$, and $\Lambda \in \Lin(P)$ be arbitrary such that there exists a $(\sigma, \tau) \in \zfootp{P}_\Lambda$\@.
Let $L = \Lambda\restriction_\Apo$\@.
We check the six axioms to show that $L$ is TSO-consistent for $U(P) = P'$ from $\sigma$\@.

Axiom \rn{O}.
This is clearly satisfied, because $U$ leaves reads untouched, and there is an order isomorphism between global writes in $P$ and global writes in $U(P)$ by Lemma~\ref{lemma:1}, and linearisations preserve all elements.

Axiom \rn{Va}.
Let ${(x = v)}_r \in U(P)$ be an arbitrary read, and assume we have some write ${(x:=v')}_w$ maximal amongst all writes to $x$ in $\lc{{(x=v)}_r}{L}$, and that all writes to $x$ in $\lc{{(x=v)}_r}{U(P')}$ are in $\lc{{(x:=v')}_w}{L}$\@.
That's to say, assume the hypotheses to \rn{Va} hold for ${(x=v)}_r$\@.
We must show $v = v'$\@.
Because all writes to $x$ in $\lc{{(x=v)}_r}{U(P')}$ are in $\lc{{(x:=v')}_w}{L}$, this means all buffer writes to $x$ in $\lc{{(x=v)}_r}{\Lambda}$ are in $\lc{{(x:=v')}_w}{L}$\@.
Because $w$ appears in $L$, it is a global write, and so all of the corresponding global writes for the buffer writes to $x$ in $\lc{{(x=v)}_r}{\Lambda}$ are also in $\lc{{(x:=v')}_w}{L}$\@.
Then we can apply Lemma~\ref{lemma:2} and conclude $v = v'$\@.

Axiom \rn{Vb}.
Let ${(x = v)}_r \in U(P)$ be an arbitrary read.
Assume there exists a write ${(x:=v')}_w$ to $x$ maximal under $<_P$ amongst all writes to $x$ in $\lc{{(x=v)}_r}{U(P)}$, but that ${(x=v)}_r <_L {(x:=v')}_w$\@.
We must show that $v' = v$\@.
That ${(x:=v')}_w <_{U(P)} {(x=v)}_r$ implies ${(\bar x :=v')}_w <_P {(x=v)}_r$, and ${(x=v)}_r <_L {(x:=v')}_w$ implies ${(x=v)}_r <_P \omega({(\bar x :=v')}_w)$\@.
So by Lemma~\ref{lemma:4}, we have $v' = v$\@.

Axiom \rn{Vc}.
Let ${(x = v)}_r \in U(P)$ be an arbitrary read.
If there exist no writes ${(x := v')}_w <_{U(P)} {(x=v)}_r$ in $U(P)$, then there exist no buffered writes ${(\bar x := v')}_w <_P {(x=v)}_r$ in $P$, and so by Proposition~\ref{prop:5}, axiom \rn{Vc} is satisfied.

Axiom \rn{L}.
Consider some arbitrary read action $r \in U(P)$ and action $a$ such that $r <_{U(P)} a$\@.
Because linearisation preserves order, it is sufficient to show that $r <_P a$\@.
If $a$ is also a read action, then $U$ leaves $a$ untouched, and so $r <_p a$\@.
If $a$ is a write action, then there exists a corresponding buffered write $a'$ in $P$ such that $r <_P a'$ and $\omega(a') = a$\@.
By Lemma~\ref{lemma:1}, $a' < \omega(a')$, so $r <_P a$ by transitivity.

Axiom \rn{S}.
Lemma~\ref{lemma:1} implies there exists an order isomorphism between $P\restriction\Aw$ and $U(P)\restriction\Aw$, so $w <_P w'$ implies $w <_{U(P)} w'$\@.
Linearisation preserves order, so $w <_{U(P)} w'$ implies $w <_\Lambda w'$\@.
Because $w$ and $w'$ are global writes, $w <_L w'$ as desired.

Axiom \rn{F}.
Let $\alpha_1, \alpha_2, \alpha_3 \in U(P)$ be arbitrary such that $\alpha_1 <_{U(P)} \alpha_3$, $\alpha_1 <_{U(P)} \alpha_3$, and $\alpha_2 \nc_{U(P)} \alpha_3$\@.
If $\alpha_1$ is a write, then it is sufficient to observe that by the definition of TSO pomsets for parallel compositions of expressions or commands, for any buffered write $b$ such that $b <_P \alpha_2$ and $b <_P \alpha_3$ for some $\alpha_2 \nc_P \alpha_3$, we have $\omega(b) <_P \alpha_2$ and $\omega(b) <_P \alpha_3$\@.
For then $\alpha_1 <_L \alpha_2$ and $\alpha_1 < \alpha_3$\@.
If $\alpha_1$ is a read, then we are done by \rn{L}.

Axiom \rn{J}.
Let $\alpha_1, \alpha_2, \alpha_3 \in U(P)$ be arbitrary such that $\alpha_1 <_{U(P)} \alpha_3$, $\alpha_2 <_{U(P)} \alpha_3$, and $\alpha_1 \nc_{U(P)} \alpha_2$\@.
If $\alpha_1$ and $\alpha_2$ are reads, then we are done by \rn{L}, so assume without loss of generality that $\alpha_1$ is some write ${(x:=v)}_w$\@.
Then $\alpha_1$ corresponds to some buffered write ${(\bar x:=v)}_w$ in $P$\@.
By the dual observation to that in \rn{F}, we have that if $b$ is a buffered write such that for some $a$ and $c$, $a <_P c$ and $b <_P a$ but $a \nc_P b$, then $\omega(b) <_P a$\@.
So we get $\alpha_1 <_L \alpha_3$ as desired.
A symmetric argument applies when $\alpha_2$ is a write.
\end{proof}

\subsection{Completeness}%
\label{sec:completeness}

We call a function $f : \poms(\Apo) \to \wp(\lists{\Apo})$ \defin{complete} when for every program $p$ and finite pomset $P \in \Ppo(p)$, if $L\in \lists{\Apo}$ is TSO-consistent with $P$, then $L \in f(P)$\@.

Our goal is to show that $\T$ is complete.
To do so, we first construct a pomset $s(P,L) \in U^{-1}(P)$ for any $P \in \poms(\Apo)$ and total order $L$ that is TSO-consistent with $P$\@.
Let the underlying set $s(P,L)$ be given by $\{0\}\times P \cup \{1\}\times P\restriction_\Aw $\@.
Let $\Phi_{s(P,L)}$ be given by $\Phi_{s(P,L)}((i,p)) = (\bar x := v)$ if both $i = 0$ and $\Phi_P(p) = (x:=v)$, and $\Phi_{s(P,L)}((i,p)) = \Phi_P(p)$ otherwise.
Intuitively, $<_{s(P,L)}$ merges the global writes into the program order in the places specified by $L$\@.
Let $<_{s(P,L)}$ be the least strict partial order generated by the following collection of inequalities:
\begin{enumerate*}[label={\textit{(\roman*)}},ref={\textit{(\roman*)}}]
\item $(0,p) <_{s(P,L)} (0,p')$ if $p <_P p'$;
\item $(1,w) <_{s(P,L)} (1,w')$ if $w \comp_P w'$ and $w <_L w'$;
\item $(0,w) <_{s(P,L)} (1, w)$ if $w \in P\restriction_\Aw$;
\item $(0,p) <_{s(P,L)} (1, w)$ if $p \comp_P w$ and $p <_L w$; and
\item $(1,w) <_{s(P,L)} (0, p)$ if $w \comp_P p$ and $w <_L p$\@.
\end{enumerate*}
We identify the program order actions in $P$ with their corresponding instances in $s(P,L)$, that is, we identify all read actions $r$ of $P$ with $(0,r)$ in $s(P,L)$, and all global write actions $w$ of $P$ with the global write action $(1,w)$ in $s(P,L)$\@.
We leverage this identification below to lift $L$ from $P$ to $s(P,L)$\@.

The proof of completeness can be broken down into three lemmas as follows:

\begin{lem}%
  \label{lemma:6}
  If $p$ is a program, $P \in \Ppo(p)$, and $L$ is TSO-consistent with $P$, then $s(P,L) \in \PTSO(p)$ and $U(s(P,L)) = P$\@.
  \qed%
\end{lem}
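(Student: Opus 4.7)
The plan is to prove the two conclusions separately. The identity $U(s(P,L)) = P$ is immediate from the construction: the map $U$ deletes the $(1,\cdot)$-indexed elements of $s(P,L)$ and relabels each buffer write $(0, x:=v)$ back to a global write, yielding a pomset on $\{0\}\times P$ whose ordering, by clause $(i)$ in the definition of $<_{s(P,L)}$, is exactly $<_P$, and whose labelling agrees with $\Phi_P$.

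For the substantive claim $s(P,L) \in \PTSO(p)$, I would proceed by structural induction on the command $c$ underlying the program $p$, strengthening the statement to accommodate arbitrary initial buffers: for every command $c$, every $P \in \Ppo(c)$, every initial buffer $L_0$ representing writes from ``before'' $c$, and every total order $L$ extending $<_P$ that schedules both the actions of $P$ and the flushes of $L_0$ in a TSO-consistent fashion, there exists a remainder buffer $L_0'$ with $(s(P,L), L_0') \in \PT_{L_0}(c)$. The lemma corresponds to the case $L_0 = L_0' = \emptylist$. The base cases $\cskip$ and $x := e$ are direct: for $x := e$, I decompose $P = P_e;\{x:=v\}$ with $(P_e, v) \in \Ppo(e)$, apply the induction hypothesis to $e$, append the buffer write $\bar x := v$, and use $\splt$ to schedule the corresponding flush of $x := v$ at the position dictated by $L$.

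For sequential composition $c_1; c_2$ with $P = P_1; P_2$, I would use axioms \rn{L}, \rn{S}, \rn{F}, and \rn{J} to show that the prefix of $L$ preceding the first action of $P_2$, together with the global writes of $P_1$ appearing later in $L$, drives a TSO-consistent execution of $P_1$ yielding some intermediate buffer $L_1$, from which $c_2$ continues via $\PT_{L_0}(c_1) \seqit \PT(c_2)$. For parallel composition $c_1 \parop c_2$, axioms \rn{F} and \rn{J} force any pending buffer writes to be flushed before any action of either thread appears in $L$ and force the combined buffer to be empty after the join, which is precisely the shape of the $\PB_L(c_1 \parop c_2)$ clause; the restrictions of $L$ to the two threads are themselves TSO-consistent with $P_1$ and $P_2$, and clauses $(ii)$, $(iv)$, $(v)$ of the $s$-construction introduce cross-edges only between $<_P$-comparable elements, so $s(P,L)$ decomposes as $s(P_1, L\restriction_{P_1}) \parop s(P_2, L\restriction_{P_2})$ for the purposes of the parallel clause. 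Conditionals reduce to the induction hypothesis applied to the branch selected by $b$, and while loops reduce to a finite unrolling $I^n$ or to $I^\omega$ depending on whether $P$ is finite or infinite.

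The main obstacle will be the parallel case: one must simultaneously verify that the restriction of $L$ to each thread is TSO-consistent relative to an empty buffer and that the combined final buffer is empty, invoking \rn{F} and \rn{J} to exclude any interleaving that would leave pending writes across the join. A secondary difficulty is the sequential case, where pinning down the handoff buffer $L_1$ precisely requires showing that the global writes of $P_1$ whose positions in $L$ follow the first action of $P_2$ are exactly those still in the buffer when $c_2$ begins; this bookkeeping is what motivates the strengthened induction hypothesis, and it is what will make the $\seqit$ clause applicable without contradiction.
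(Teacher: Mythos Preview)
The paper does not actually give a proof of this lemma: it is stated with a trailing \qed and no argument, so there is nothing to compare your proposal against. Your overall plan---structural induction on the command syntax, with the hypothesis strengthened to carry an initial buffer $L_0$ so that the $c_1;c_2$ case goes through via $\seqit$---is the natural route, and your observations about the parallel case (that clauses~\textit{(ii)}, \textit{(iv)}, \textit{(v)} of the $s$-construction only relate $<_P$-comparable elements, so $s(P_1 \parop P_2, L)$ splits as a parallel composition) are correct and are exactly what makes that case work.

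There is one genuine imprecision you should repair before carrying this out. Your strengthened hypothesis asserts $(s(P,L), L_0') \in \PT_{L_0}(c)$, but the pomset component of any pair in $\PT_{L_0}(c) = \splt(L_0) \seqit \PB(c) \seqit \splt$ must contain, as a prefix, whatever global writes of $L_0$ are flushed during the execution of $c$. The construction $s(P,L)$ as defined in the paper is built on $\{0\}\times P \cup \{1\}\times P\restriction_{\Aw}$ only; it contains no occurrences of the $L_0$-writes. So as written the strengthened claim is false whenever $L_0 \neq \emptylist$ and at least one flush happens. You need either to generalize $s$ to a three-argument $s(P,L_0,L)$ that threads the $L_0$-flushes into the pomset at the positions dictated by $L$, or to phrase the induction hypothesis as: there exist a prefix $L_0^{\mathrm{fl}}$ of $L_0$ and a remainder $L_0'$ with $L_0 = L_0^{\mathrm{fl}};L_0'$, and a pomset $Q$ obtained by interleaving $L_0^{\mathrm{fl}}$ into $s(P,L\restriction_P)$ at the $L$-specified positions, such that $(Q,L_0') \in \PT_{L_0}(c)$. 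Either fix is straightforward, but without it the sequential step does not typecheck.
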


\begin{lem}%
  \label{lemma:5}
  If $P \in \poms(\Apo)$ and $L$ is TSO-consistent with $P$, there exists a $\Lambda \in \Lin(s(P,L))$ such that $\Lambda\restriction_\Apo = L$\@.
  \qed%
\end{lem}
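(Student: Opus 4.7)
The plan is to construct $\Lambda$ as any linear extension of an auxiliary relation $R$ on $s(P,L)$, defined as the union of $<_{s(P,L)}$ with the lift of $<_L$ to the $\Apo$-elements $\{(0, r) : r \in P\restriction_{\Ar}\} \cup \{(1, w) : w \in P\restriction_{\Aw}\}$ of $s(P,L)$. Any such $\Lambda$ extends $<_{s(P,L)}$, hence is a linearisation of $s(P,L)$; and since $R$ already contains $L$ on the $\Apo$-part, any linear extension will satisfy $\Lambda\restriction_{\Apo} = L$. All that has to be checked is that $R$ is acyclic.

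The first step is to verify that $<_{s(P,L)}$ and the lift of $<_L$ agree on $\Apo$-elements, so that $R$ restricted to that subset coincides with $L$. This is a case analysis on rules (i)--(v) of the definition of $<_{s(P,L)}$: two reads are only related by (i), which requires $<_P$ and then yields $<_L$ by \rn{L}; two global writes are related by (ii), which directly uses $<_L$; and a read and a global write are related by (iv) or (v), whose hypotheses already invoke $<_L$. Together with the totality of $<_L$ on writes granted by \rn{O}, this gives $R\restriction_{\Apo} = L$.

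The crux is the acyclicity of $R$. A hypothetical cycle cannot lie entirely in the $\Apo$-part of $s(P,L)$, since $L$ totally orders that part, so it must visit at least one buffer write $(0,w)$ for some $w \in P\restriction_{\Aw}$; all edges adjacent to $(0,w)$ come from $<_{s(P,L)}$. I would take a minimal such cycle and analyse the predecessor $a$ and successor $b$ of $(0,w)$: rules (i), (v) restrict $a$ to either $(0,p)$ with $p <_P w$ or $(1,w')$ with $w' \comp_P w$ and $w' <_L w$, while (i), (iii), (iv) restrict $b$ symmetrically. In each of the resulting combinations, I would use transitivity of $<_P$ and $<_L$ together with the axioms \rn{L}, \rn{S}, \rn{F}, \rn{J} (and the corresponding rule from (i)--(v)) to exhibit a direct $R$-edge $a \to b$, contradicting minimality. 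Once $R$ is acyclic, a topological sort supplies the required $\Lambda$. The main obstacle will be the case analysis itself, especially the subcases where a write in $P$ has been reordered past a read in $L$ (the hallmark TSO relaxation): here the fork and join axioms must be combined carefully with the value axiom and the order-consistency of $L$ to rule out the apparent cycles.
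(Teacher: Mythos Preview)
The paper supplies no proof here (the lemma carries only a \qed), so there is nothing to compare your argument against. Your overall plan—adjoin the lift of $L$ to $<_{s(P,L)}$ on the $\Apo$-labelled nodes, show the resulting relation $R$ is acyclic, and topologically sort—is the natural one.

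The gap is in the acyclicity step, and it is not a matter of missing bookkeeping. Your shortcut argument does not survive the case where the predecessor of the buffer node $(0,w)$ arrives via clause~(v) with the parameter $p$ itself a write. Concretely: let $P$ be the linear program order $w_0 \to w' \to w \to r \to w''$ arising from $x{:=}0;\,y{:=}1;\,z{:=}1;\,t{:=}x$ (so $w',w$ write to $y,z$ and $r$ reads $x$), and take the TSO-consistent total order $L:\; w_0 < r < w' < w < w''$ (axiom~\rn{Va} handles the read $r$, so \rn{Vc} is not invoked). Then clause~(v) gives $(1,w') < (0,w)$ since $w' \comp_P w$ and $w' <_L w$; clause~(i) gives $(0,w) < (0,r)$; and clause~(iv) gives $(0,r) < (1,w')$ since $r \comp_P w'$ and $r <_L w'$. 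This is a $3$-cycle already among the \emph{generators} of $<_{s(P,L)}$, before $R$ is even formed, and none of its three vertices can be bypassed by a direct $R$-edge. No combination of \rn{L}, \rn{S}, \rn{F}, \rn{J}, \rn{V} rules it out. (Relatedly, your first step—checking that $<_{s(P,L)}$ and $L$ agree on $\Apo$-elements—inspects only the generating clauses, not the transitive closure through buffer nodes; the same example shows that the closure produces $(1,w') <_{s(P,L)} (0,r)$ while $r <_L w'$.)

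The difficulty lies in the paper's definition rather than in your strategy: clause~(v), and dually~(iv), uses the $L$-position of $p$—which locates the \emph{global} flush of $p$—to place the \emph{buffer} node $(0,p)$, and this is only sensible when $p$ is a read or $\delta$. If one reads (iv)–(v) with $p$ restricted to non-writes, the pathology disappears and your shortcut argument goes through cleanly: the only edges incident to a buffer node $(0,w)$ are then from clause~(i) and clause~(iii), and transitivity of $<_P$ together with axioms \rn{L} and \rn{S} supplies every needed bypass. Axioms \rn{F}, \rn{J}, and \rn{V} play no role in this lemma.
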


\begin{lem}%
  \label{lemma:7}
  If $p$ is a program, $P \in \PTSO(p)$, $L_0 \in \Ls$, $L$ is TSO-consistent with $U(P) \parop L_0$, $\Lambda \in \Lin(P \parop L_0)$, and $\Lambda\restriction_\Apo = L$, then $\zfootp{P}_\Lambda \neq \emptyset$\@.
  \qed%
\end{lem}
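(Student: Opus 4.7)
The plan is to induct on the series-parallel structure of $P$, which is well-balanced by Lemma~\ref{lemma:1}, and to construct an explicit witness $(\sigma,\tau)\in\zfootp{P}_\Lambda$\@. At each step I apply the appropriate footprint rule (\rn{Act}, \rn{Seq}, or \rn{Par}), decomposing $\footp{P}_\Lambda$ via Corollary~\ref{cor:1}, Proposition~\ref{prop:2}, or Proposition~\ref{prop:8}, together with a compatible decomposition of $\Lambda$\@. Well-balancedness yields $\charac{x}{P}=(0,0)$ for every $x$, so by Corollary~\ref{cor:2} the buffer entries created and consumed cancel, making $\zeta(\sigma)$ and $\zeta(\tau)$ achievable.

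The linear base case $P=\lst{\lambda_1,\dots,\lambda_n}$ is the heart of the argument. I split $\Lambda=E_0;\{\lambda_1\};E_1;\cdots;\{\lambda_n\};E_n$ and $\seqop$-combine atomic footprints. Buffer writes and global writes admit unconstrained ``old'' values in their footprints, so only the reads are nontrivial. For each $(x=v)_r$, TSO-consistency of $L$ selects, via axiom \rn{V}, a witness: under \rn{Vb}, the matching global write $\omega(b)$ of the $<_P$-maximal buffer write $b=(\bar x := v)$ below the read lies above the read in $\Lambda$, so the buffer still carries $v$ at that point; under \rn{Va}, the $<_L$-maximal global write to $x$ below the read has value $v$, and Proposition~\ref{prop:12} supplies a compatible $\footp{E_i}^*$ footstep; under \rn{Vc}, I initialise $\sigma(x)=v$ directly. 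For the inductive cases, sequential and parallel composition decompose $\Lambda$ accordingly, and I must check that $L_i=\Lambda_i\restriction_\Apo$ is TSO-consistent with $U(P_i)\parop L_0^i$, where $L_0^i$ absorbs $L_0$ together with the global writes of the sibling component; the induction hypothesis then delivers component footsteps which combine via \rn{Seq} or \rn{Par}, with Corollary~\ref{cor:l-determines-tau} guaranteeing automatic agreement of the $\tau_i$ on $\Loc$\@.

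The main obstacle will be the parallel case. Restricting $L$ to $L_i$ deletes the other component's reads and buffer writes from $\Lambda$, and one must verify that this preserves every TSO axiom for $U(P_i) \parop L_0^i$\@. Axioms \rn{O}, \rn{L}, and \rn{S} survive restriction easily, but axiom \rn{V} is delicate because a read in $U(P_i)$ might previously have been justified in $L$ by a write whose position depended on actions now removed. Here axioms \rn{F} and \rn{J}, applied to the implicit fork and join structure of $P_1\parop P_2$, constrain the cross-thread global writes to be linearised in $L$ in exactly the way required for the restrictions to inherit the value axioms. Once this inheritance is secured, combining the inductive footsteps via \rn{Par} requires only $\sigma_1\Uparrow\sigma_2$, which can be arranged by choosing the initial global portions of $\sigma_i$ to agree, since by Proposition~\ref{prop:12} both sides' global-state requirements stem from the common global-write prefix of $\Lambda$\@.
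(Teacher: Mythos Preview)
The paper does not actually supply a proof of Lemma~\ref{lemma:7}: the statement is closed with a \qed, and the only indication of the intended argument is the remark immediately following it, that the parameter $L_0$ is there so that ``when we want to apply the induction hypothesis to one of the pomsets, we need to be able to reference the global writes in the other pomset.'' Your proposal follows exactly this strategy---induction on the SP structure of $P$, with the parallel case handled by pushing the sibling's global writes into an enlarged $L_0^i$---so at the level of overall approach you are aligned with what the paper intends.

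One point worth tightening: you phrase the sequential step as ``check that $L_i=\Lambda_i\restriction_\Apo$ is TSO-consistent with $U(P_i)\parop L_0^i$,'' as though you will invoke the induction hypothesis on both halves of an arbitrary split $P=P_1;P_2$. That is not quite available, because an arbitrary sequential cut of a well-balanced $P$ need not leave $P_1$ and $P_2$ individually well-balanced (a buffer write in $P_1$ may have its matching global write in $P_2$), so $\charac{x}{P_i}=(0,0)$ can fail and $\zfootp{P_i}_{\Lambda_i}$ need not be the right object. The clean way through---and the reason your linear base case is, as you say, the heart of the matter---is to decompose $P$ only at fork/join boundaries, where the semantic clauses for $\parop$ force the buffer to be empty; between such boundaries the pomset is linear and you handle it directly via Corollary~\ref{cor:1} rather than by a further inductive appeal. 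With that adjustment (which your text already gestures at by treating the linear case as primary and invoking Proposition~\ref{prop:2} only to glue), the sketch is sound and matches the paper's intent.
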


\noindent The presence of $L_0$ in the statement of Lemma~\ref{lemma:7} lets us use an induction hypothesis in the case where $P$ is a parallel composition of pomsets, because when we want to apply the induction hypothesis to one of the pomsets, we need to be able to reference the global writes in the other pomset.

\begin{thm}
  The function $\T$ is complete.
\end{thm}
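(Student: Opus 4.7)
The plan is to assemble the theorem directly from the three preceding lemmas. Fix a program $p$, a finite pomset $P \in \Ppo(p)$, and a list $L$ that is TSO-consistent with $P$; we must exhibit some $P' \in U^{-1}(P)$ and some $\Lambda \in \Lin(P')$ satisfying $\Lambda\restriction_\Apo = L$ and $\zfootp{P'}_\Lambda \neq \emptyset$, since this is exactly what membership $L \in \T(P)$ unfolds to.

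I would take $P' = s(P,L)$ as the candidate TSO pomset. Lemma~\ref{lemma:6} supplies two facts simultaneously: $s(P,L) \in \PTSO(p)$ and $U(s(P,L)) = P$, which together witness $s(P,L) \in U^{-1}(P)$. Lemma~\ref{lemma:5} then provides a linearisation $\Lambda \in \Lin(s(P,L))$ with $\Lambda\restriction_\Apo = L$.

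Finally, I would apply Lemma~\ref{lemma:7} with $L_0 = \emptylist$, so that $s(P,L) \parop \emptylist$ collapses to $s(P,L)$ and $U(s(P,L)) \parop \emptylist$ collapses to $P$. The hypotheses of that lemma then reduce to the data already in hand: $L$ is TSO-consistent with $P$, and $\Lambda \in \Lin(s(P,L))$ satisfies $\Lambda\restriction_\Apo = L$. The lemma delivers $\zfootp{s(P,L)}_\Lambda \neq \emptyset$. Combining the three ingredients, the definition of $\T$ yields $L = \Lambda\restriction_\Apo \in \T(P)$ via the representative $s(P,L) \in U^{-1}(P)$, completing the proof.

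The main obstacle is not in this assembly but inside the lemmas themselves. Lemma~\ref{lemma:6} requires checking that the construction $s(P,L)$, which threads the global writes into the program order at the positions prescribed by $L$ while keeping each buffer write strictly below its corresponding global write, actually arises from the TSO semantic clauses $\PT$ applied to $p$. Lemma~\ref{lemma:7} is the technical heart, since it must manufacture a genuine footstep $(\sigma,\tau) \in \footp{s(P,L)}_\Lambda$ with empty buffers from purely axiomatic information; I would expect this to proceed by structural induction along the SP decomposition of $s(P,L)$, using the extra slack $L_0$ to supply global writes witnessing reads that belong to the other parallel branch, and to lean on axioms \rn{Va}, \rn{Vb}, and \rn{Vc} to certify that each read receives a consistent value at its footstep.
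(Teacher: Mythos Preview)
Your proposal is correct and follows essentially the same assembly as the paper: invoke Lemma~\ref{lemma:6} to place $s(P,L)$ in $U^{-1}(P)$, Lemma~\ref{lemma:5} to obtain a linearisation $\Lambda$ with $\Lambda\restriction_\Apo = L$, and Lemma~\ref{lemma:7} with $L_0 = \emptylist$ to get $\zfootp{s(P,L)}_\Lambda \neq \emptyset$, then read off $L \in \T(P)$. Your additional commentary on the internal difficulty of the lemmas is accurate but not part of the theorem's proof proper.
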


\begin{proof}
Given a program $p$, a $P \in \Ppo(P)$ and an $L$ that is TSO-consistent with $P$, we have by Lemma~\ref{lemma:6} that $s(P,L) \in U^{-1}(P)$\@.
By Lemma~\ref{lemma:5}, we have a $\Lambda \in \Lin(s(P,L))$ such that $\Lambda\restriction_\Apo = L$\@.
By Lemma~\ref{lemma:7} with $L_0 = \emptylist$, $\zfootp{s(P,L)}_\Lambda \neq \emptyset$\@.
So $L \in \T(P)$, and we conclude completeness.
\end{proof}

\section{Related Work}%
\label{sec:related-work}

Other approaches to semantics for weak memory models mostly use execution graphs and operational semantics.
Execution graphs~\cite{Batty2011,Boehm2008} serve to describe the executional behaviour of an entire program, an inherently non-modular approach.
We see our denotational framework as offering an alternative basis for program analysis, compositional and modular by design.
Boudol and~Petri~\cite{Boudol2009} gave an operational semantics framework for weak memory models that uses buffered states.
Jagadeesan et~al.\@~\cite{Jagadeesan2012} adapted a fully abstract, trace-based semantics by Brookes~\cite{Brookes1996} to give a fully abstract denotational semantics for TSO\@.
Higham and~Kawash~\cite{Higham:2000:MCP:645446.653212} gave an axiomatic semantics for TSO using linearizations.
They showed their semantics to be equivalent to an abstract machine with buffers based on the informal description in~\cite{SPARC8}.
Their account does not consider forking and joining of processes.
Demange et~al.\@~\cite{Demange:2013:PBB:2429069.2429110} proposed a tractable memory model consistent with the Java Memory Model.
They give an operational semantics using buffers and an axiomatic account and show the two to be equivalent.
Sewell et~al.\@~\cite{Sewell:2010:XRU:1785414.1785443} gave a TSO-like memory model for x86, formalized in HOL4.
Alglave et~al.\@~\cite{Alglave:2014:HCM:2633904.2627752} gave a generic axiomatic framework for describing weak memory models and showed how to specialize it to various memory models.
Jeffrey and~Riely~\cite{Jeffrey2019} give a denotational account of relaxed memory models using event structures.

Pratt~\cite{Pratt1986} was the first to generalize from traces to pomsets in the study of concurrency.
He introduced the parallel composition operations we presented in Section~\ref{sec:denotational} and he used pomsets in the study of concurrent processes.
Building on Pratt's work, Brookes~\cite{Brookes2015,Brookes2016MFPS,Brookes2016} introduced a pomset framework to study weak memory.
This framework used Pratt's parallel composition operator, and its sequential composition is a variant of Pratt's concatenation operation.
The TSO semantics given above builds on Brookes's work.
The key technical differences involve adapting pomset semantics to incorporate state equipped with abstract buffers, with careful accounting to deal properly with order relaxations allowed by TSO\@.
Our formal axiomatization of SPARC TSO, including full treatment of forks and joins, is a crucial part of the set-up, allowing us to be precise about the relationship between our abstract denotational semantics and the more concrete and informal characterization of TSO that appears in the manual.

\section{Conclusion}

Our denotational semantics accurately captures the behaviours of SPARC TSO, and its compositionality enables us reason modularly about programs.
The main strength of the pomset approach is its conceptual simplicity.
Unlike trace semantics, which include irrelevant orderings of actions, our pomset semantics specifies only relevant orderings of actions.
Its simplicity also makes it readily adaptable to other memory models.
For example, to capture SPARC PSO, which relaxes TSO to provide only per-location global orders on writes, we conjecture that it is sufficient to replace the single buffer parameter $L$ in the $\PB_L$ semantic clauses with families ${\{L_x\}}_{x \in \Loc}$ of buffers, and to then modify the buffer flushing clauses in the obvious manner and to handle store barrier ``\texttt{stbar}'' commands.
Memory models provided by modern processors are often much weaker than TSO and PSO\@.
To model these, we conjecture that it is sufficient to specify the correct set of actions, and to modify the semantic clauses generating pomsets and footsteps accordingly.
We believe pomset semantics provide fertile ground for future research in semantics for weak memory models.

\section*{Acknowledgements}

This paper is an extended version of one~\cite{KAVANAGH2018223} presented at MFPS~XXXIII\@.
The authors gratefully acknowledge feedback from anonymous reviewers and from André Platzer.

\bibliographystyle{alpha}
\bibliography{lmcs}

\end{document}